\documentclass{article}

\PassOptionsToPackage{numbers, compress}{natbib}
 \usepackage[preprint]{neurips_2026}

\usepackage[utf8]{inputenc} 
\usepackage[T1]{fontenc}    
\usepackage{hyperref}       
\usepackage{url}            
\usepackage{booktabs}       
\usepackage{amsfonts}       
\usepackage{nicefrac}       
\usepackage{microtype}      
\usepackage{xcolor}         

\usepackage{wrapfig}
\usepackage{graphicx}
\usepackage{mathtools}
\usepackage{amsmath, amssymb, amsthm}
\usepackage{bm}
\usepackage{enumitem}
\usepackage{multirow}
\usepackage{makecell}
\usepackage{siunitx}
\usepackage{adjustbox}
\usepackage{tabularx}
\usepackage{diagbox}
\usepackage[most]{tcolorbox}
\usepackage{anyfontsize}
\usepackage{hyperref}       
\usepackage{url}            
\usepackage{booktabs}       
\usepackage{amsfonts}       
\usepackage{nicefrac}       
\usepackage{microtype}      
\usepackage{xcolor}         
\usepackage{natbib, mathtools}
\usepackage{multirow}
\usepackage{graphicx}
\usepackage[table]{xcolor}
\usepackage{array}
\usepackage{siunitx}
\usepackage{ragged2e}
\usepackage{listings}
\newtheorem{theorem}{Theorem}

\newtheorem{remark}{Remark}

\newtheorem{assumption}{Assumption}

\usepackage{caption}
\title{RLSpoofer: A Lightweight Evaluator for LLM Watermark Spoofing Resilience.}

\author{
Hanbo Huang\textsuperscript{1}
\quad
Xuan Gong\textsuperscript{1}
\quad
Yiran Zhang\textsuperscript{1}
\quad
Hao Zheng\textsuperscript{1}
\quad
Shiyu Liang\textsuperscript{1~$\dagger$}
\\
\textsuperscript{1}Shanghai Jiao Tong University\\
{\tt\small \{hhuang417, lsy18602808513\}@sjtu.edu.cn}
}

\begin{document}

\def\thefootnote{$\dagger$}\footnotetext{Corresponding author.}

\maketitle

\begin{abstract}
Large language model (LLM) watermarking has emerged as a promising approach for detecting and attributing AI-generated text, yet its robustness to black-box spoofing remains insufficiently evaluated. Existing evaluation methods often demand extensive datasets and white-box access to algorithmic internals, limiting their practical applicability. In this paper, we study watermark resilience against spoofing fundamentally from a distributional perspective. We first establish a \textit{local capacity bottleneck}, which theoretically characterizes the probability mass that can be reallocated under KL-bounded local updates while preserving semantic fidelity. Building on this, we propose RLSpoofer, a reinforcement learning-based black-box spoofing attack that requires only 100 human-watermarked paraphrase training pairs and zero access to the watermarking internals or detectors. Despite weak supervision, it empowers a 4B model to achieve a 62.0\% spoof success rate with minimal semantic shift on PF-marked texts, dwarfing the 6\% of baseline models trained on up to 10,000 samples. Our findings expose the fragile spoofing resistance of current LLM watermarking paradigms, providing a lightweight evaluation framework and stressing the urgent need for more robust schemes.

\end{abstract}

\section{Introduction}
With the rapid advancement and increasing availability of large language models (LLMs), they are being widely applied across diverse applications to generate fluent and human-like content~\citep{yang2025qwen3,grattafiori2024llama}. However, this widespread use raises major concerns about model misuse, including the generation of misinformation~\cite{chen2023can}, copyright violations~\cite{xu2025copyright}, data contamination~\cite{shumailov2023curse}, and academic dishonesty~\cite{cotton2024chatting}. As a safeguard, text watermarking has become an important defense. By subtly embedding statistical signals into model outputs, watermarking enables reliable AI text detection and source tracking while maintaining the quality of the text~\citep{kirchenbauer2023watermarkKGW, zhao2024permutePF}.

Most existing watermarking schemes predominantly operate on a generate-and-detect paradigm, utilizing detection algorithms to identify hidden patterns that differentiate AI-generated text from human authorship~\citep{lu2024entropyEWD,lee2023wroteSWEET}. However, this paradigm inadvertently creates an avenue for watermark spoofing. Malicious actors can query a watermarked LLM to gather text samples, using this data to construct a surrogate model capable of approximating the underlying watermarking rules~\cite{jovanovic2024watermark}. By successfully forging the watermark, such attacks pose a fundamental threat to the reliability of existing watermarking systems.

To rigorously assess the spoofing resistance of LLM watermarking schemes, researchers have investigated various attack paradigms. However, existing methods typically suffer from at least one of three major limitations: (1) reliance on partial algorithmic knowledge or detector API access~\cite{kirchenbauer2023watermarkKGW,chen2024mark}; (2) the need for large training corpora~\cite{an2025ditto,gu2023learnabilitydistilled} (up to 10k); or (3) an inability to generate semantically coherent text that seamlessly matches the original context~\cite{pang2024attacking,gloaguen2024discovering}. These limitations reduce their practicality as evaluation tools for watermark spoofing resilience. As LLMs continue to advance, there is an urgent need for a practical, generalizable, and data-efficient methodology for benchmarking the robustness of watermarking schemes.

In this paper, we investigate the resilience of LLM watermarks to spoofing attacks. In the black-box setting, direct instance-level spoofing is intractable because the adversary has no access to the watermark details or the detector. Motivated by recent evidence that watermarked text exhibits systematic distributional discrepancy from human-written text~\cite{huang2025rlcracker}, even for distortion-free watermarking schemes~\cite{liu2024can}, we take a \textit{distributional view} of watermark spoofing. Specifically, the attacker seeks to shift the paraphraser’s output distribution away from the human-like distribution and toward the watermarked distribution.

\begin{figure}
    \centering
    \includegraphics[width=\linewidth]{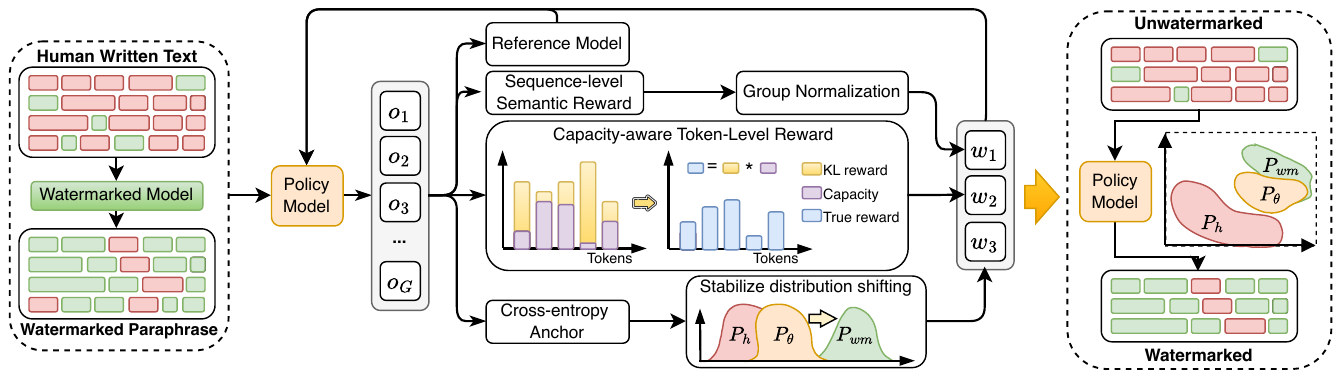}
    \caption{Overview of RLSpoofer. Given human--watermarked rewrite pairs, RLSpoofer jointly optimizes sequence-level semantic rewards, capacity-aware token-level rewards, and a cross-entropy anchor, shifting the policy distribution $P_{\theta}$ from the human-like distribution $P_h$ toward the watermarked distribution $P_{wm}$, thereby effectively spoofing the watermark.}
    \label{fig:overview_spoofer}
    \vspace{-0.4cm}
\end{figure}

Leveraging the autoregressive structure of LLM generation, we establish a \textit{local capacity bottleneck} that characterizes how much token-level probability mass can be reallocated under KL-bounded local updates while preserving semantic fidelity. Building on this insight, we propose {RLSpoofer}, a reinforcement-learning-based black-box attack that requires no access to the watermarking internals or the detector. As illustrated in Figure~\ref{fig:overview_spoofer}, RLSpoofer jointly optimizes semantic fidelity, capacity-aware token-level reward, and a cross-entropy anchor, thereby shifting the model outputs toward the target watermarked distribution. Using only \textbf{100 training pairs}, RLSpoofer enables a compact 4B model to achieve a \textbf{62.0\%} spoof success rate against PF-Watermark, substantially surpassing the \textit{6.0\%} of baseline trained on \textit{10,000 examples}. Extensive experiments across five attacker models and six watermarking schemes show that RLSpoofer provides a practical, sample-efficient stress test for spoofing resilience, exposing critical vulnerabilities in current watermarking defenses. Our contributions are as follows:
\begin{itemize}[leftmargin=*, itemsep=1pt, parsep=0pt, topsep=0pt]
    \item We establish a \textit{local capacity bottleneck} that characterizes the token-level bottleneck on probability mass reallocation under KL-bounded local updates while preserving semantic integrity. (Sec.~\ref{sec:local_capacity_bottleneck}.)
    \item We propose RLSpoofer, an effective and sample-efficient black-box attack. It successfully spoofs watermarks with high semantic fidelity, requiring only limited human-watermarked rewrite pairs and zero access to the watermarking scheme or detector. (Sec.~\ref{sec:rlspoofer}.)
    \item We conduct extensive experiments across five architectures and six watermarking schemes from both logit-based and sampling-based families. The results demonstrate the broad effectiveness of RLSpoofer, exposing critical vulnerabilities in current watermark defense. (Sec.~\ref{sec:experiments}.)
\end{itemize}

\section{Preliminaries}
\label{sec:preliminaries}

\textbf{LLM Paraphraser and Logit-based Watermark.}
Let \(\mathcal{V}\) be a finite vocabulary and \(\mathcal{V}^*\) denote the space of all finite token sequences. For a source input \(\mathbf{X} \in \mathcal{V}^*\), an autoregressive paraphraser \(\pi_\theta\) induces a probability distribution over generated output sequences \(\mathbf{X}' = (x'_1, \dots, x'_{|\mathbf{X}'|}) \in \mathcal{V}^*\): $P_\theta(\mathbf{X}' \mid \mathbf{X}) := \prod_{t=1}^{|\mathbf{X}'|} \pi_\theta(x'_t \mid x'_{<t}, \mathbf{X}).$ Independent of the paraphrasing model, a watermark is embedded into generated text using a secret key \(s\) and a target green-token rate \(q \in (0,1)\). At each generation step, key \(s\) defines a prefix-dependent pseudorandom subset of the vocabulary with expected size \(q|\mathcal{V}|\), whose tokens are softly upweighted during sampling. Given a sequence \(\mathbf{X}'\), the detector $f$ computes a score \(f(\mathbf{X}', s)\) under the key \(s\) and flags the sequence watermarked whenever \(f(\mathbf{X}', s) > \delta\), where \(\delta\) is a fixed detection threshold.

\textbf{Threat Model: Semantics-Preserving Paraphrasing.}  
We consider a black-box adversary aiming to spoof the watermark into an unwatermarked sequence \(\mathbf{X}\) via paraphrasing. The adversary is defined by:
(1) \textbf{\textit{Objective}}: Generate an output \(\mathbf{X}'\) that preserves the semantic meaning of \(\mathbf{X}\) while triggering the watermark detector.
(2) \textbf{\textit{Knowledge}}: The adversary lacks access to the secret key \(s\), the detector function \(f\), and the threshold \(\delta\), but can observe watermarked text by querying the generator.
(3) \textbf{\textit{Capability}}: The adversary can construct a training set by querying the watermarked model for paraphrases of given texts, and can subsequently fine-tune a paraphrasing model $\pi_\theta$ without requiring access to the detector.

\textbf{Adversarial Objective.}
Let \(d: \mathcal{V}^* \times \mathcal{V}^* \to [0,\infty)\) be a distance metric where smaller values indicate higher semantic similarity. For a tolerance \(\varepsilon>0\), we define the instance-level spoof success rate as: \(\operatorname{SSR}_{\mathbf{X}}(\theta) := P_\theta(\{ \mathbf{X}' \in \mathcal{V}^* : d(\mathbf{X}', \mathbf{X}) \le \varepsilon, \, f(\mathbf{X}', s) > \delta \} \mid \mathbf{X})\). The adversary seeks to maximize this rate. However, without access to \(f\) and \(s\), direct optimization is intractable.


\textbf{Distributional Surrogate.}
Prior work suggests a distributional discrepancy between watermarked and human-written text~\cite{huang2025rlcracker,liu2024can}. Since the detector function $f$ and the secret key $s$ are unavailable in the black-box setting, directly optimizing the instance-level spoof success rate is intractable. We therefore adopt a distributional surrogate objective that favors outputs relatively more likely under a watermarked distribution than under a human-like distribution, while enforcing semantic fidelity. Let $P_h(\cdot \mid \mathbf{X})$ and $P_{wm}(\cdot \mid \mathbf{X})$ denote the human-like and watermarked distributions, respectively. We consider the following surrogate objective based on the Kullback--Leibler (KL) divergences $D_{\mathrm{KL}}$:
\begin{equation*}
\max_\theta \;
D_{\mathrm{KL}}\!\big(P_\theta(\cdot \mid \mathbf{X}) \,\|\, P_h(\cdot \mid \mathbf{X})\big)
-
D_{\mathrm{KL}}\!\big(P_\theta(\cdot \mid \mathbf{X}) \,\|\, P_{wm}(\cdot \mid \mathbf{X})\big).
\end{equation*}
Under the mild regularity conditions in Appendix~\ref{app:assumption}, the KL-difference objective admits an expected log-likelihood-ratio form. Imposing the semantic-fidelity constraint $P_\theta(d(X',X) < \varepsilon \mid X) \ge 1-\rho$ for $\rho \in [0,1)$ yields the following constrained surrogate objective:
\begin{equation}
\max_\theta \;
\mathbb{E}_{\mathbf{X}' \sim P_\theta(\cdot \mid \mathbf{X})}
\left[
\log \frac{P_{wm}(\mathbf{X}' \mid \mathbf{X})}{P_h(\mathbf{X}' \mid \mathbf{X})}
\right]
\quad
\text{s.t.}\quad
P_\theta\!\bigl(d(\mathbf X',\mathbf X)<\varepsilon \mid \mathbf X\bigr)\geq 1-\rho.
\label{eq:J_clean}
\end{equation}
This formulation yields a tractable surrogate objective that prefers semantically faithful outputs with higher relative likelihood under the watermarked reference than under the human-like reference.


\section{Methods}
\label{sec:methods}

\subsection{Local Capacity Bottleneck for Semantics-Preserving KL-Bounded Redistribution}
\label{sec:local_capacity_bottleneck}

To optimize the sequence-level objective in Eq.~\ref{eq:J_clean}, we decompose it into token-level steps. Let the generation history at step \(t\) be \(h_t:=(x'_{<t},\mathbf X)\). Since both reference distributions are autoregressive, $\log\frac{P_{wm}(\mathbf X'\mid\mathbf X)}{P_h(\mathbf X'\mid\mathbf X)}=\sum_{t=1}^{|\mathbf X'|}\log\frac{P_{wm}(x'_t\mid h_t)}{P_h(x'_t\mid h_t)}.$
Accordingly, the adversarial objective becomes
\begin{equation}
\max_{\theta}\;
\mathbb E_{\mathbf X'\sim P_\theta(\cdot\mid\mathbf X)}
\sum_{t=1}^{|\mathbf X'|}
\left[
\log\frac{P_{wm}(x'_t\mid h_t)}{P_h(x'_t\mid h_t)}
\right]
\quad
\text{s.t.}
\quad
P_\theta\!\bigl(d(\mathbf X',\mathbf X)<\varepsilon \mid \mathbf X\bigr)\geq 1-\rho.
\label{eq:llr_decomposition_clean}
\end{equation}
This decomposition identifies a token-level surrogate signal. To localize the sequence-level semantic constraint, we consider KL-bounded local updates around the human-like reference $P_h(\cdot \mid h_t)$, using such neighborhoods as a tractable proxy for compatible modifications. This leads to the following local capacity question: \textit{under a fixed KL budget, how much probability mass can be reassigned?}

Formally, for a fixed history \(h_t\), we consider a local update from \(P_h(\cdot\mid h_t)\) to \(\pi_\theta(\cdot\mid h_t)\) under a KL constraint, and bound the resulting increase of probability mass on an arbitrary subset of tokens. Let $d_{\mathrm{kl}}(p\|q):=p\log\frac{p}{q}+(1-p)\log\frac{1-p}{1-q},
\ p,q\in[0,1],$ with the standard conventions \(0\log 0:=0\) and \(b\log(b/0):=+\infty\) for \(b>0\). We then establish the following theorem.
\begin{theorem}[Local capacity characterization]
\label{thm:capacity_clean}
Fix a history \(h_t\). For any next-token distribution \(\pi_\theta(\cdot\mid h_t)\) and any subset \(A\subseteq\mathcal V\), we have
\begin{equation}
\resizebox{0.93\linewidth}{!}{$
\pi_\theta(A\mid h_t)-P_h(A\mid h_t)
\le
\sup\Bigl\{
\lambda\in[0,1]:
d_{\mathrm{kl}}\!\Bigl(\lambda \,\Big\|\, 1-\max_{x\in\mathcal V}P_h(x\mid h_t)\Bigr)
\le
D_{\mathrm{KL}}\!\bigl(\pi_\theta(\cdot\mid h_t)\,\|\,P_h(\cdot\mid h_t)\bigr)
\Bigr\}.
$}
\end{equation}
Moreover, for any fixed \(C\ge 0\),
\begin{equation}
\lim_{1-\max_{x\in\mathcal V}P_h(x\mid h_t)\to 0}
\sup_{\pi_\theta:\,D_{\mathrm{KL}}(\pi_\theta(\cdot\mid h_t)\,\|\,P_h(\cdot\mid h_t))\le C}
\sup_{A\subseteq\mathcal V}
\bigl(\pi_\theta(A\mid h_t)-P_h(A\mid h_t)\bigr)=0.
\label{eq:capacity_vanish_clean_rewrite}
\end{equation}
\end{theorem}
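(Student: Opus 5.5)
The plan is to reduce the statement to a two-point (binary) comparison via the data-processing inequality, and then use monotonicity of the binary KL $d_{\mathrm{kl}}(a\|b)$. Write $p:=P_h(\cdot\mid h_t)$, $\pi:=\pi_\theta(\cdot\mid h_t)$, $D:=D_{\mathrm{KL}}(\pi\|p)$, let $m\in\arg\max_x p(x)$, and set $\epsilon:=1-p(m)$. Put $\Delta:=\pi(A)-p(A)$; if $\Delta\le 0$ there is nothing to prove. Let $S:=\{\lambda\in[0,1]: d_{\mathrm{kl}}(\lambda\|\epsilon)\le D\}$. Since $d_{\mathrm{kl}}(\epsilon\|\epsilon)=0\le D$, we get $\epsilon\in S$, so $\sup S\ge \epsilon$. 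In particular it suffices to treat the case $\Delta>\epsilon$.

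\textbf{Step 1 (case $m\in A$).} Here $\Delta=p(A^c)-\pi(A^c)\le p(A^c)\le p(\mathcal V\setminus\{m\})=\epsilon\le\sup S$, and we are done.

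\textbf{Step 2 (case $m\notin A$).} Now $b:=p(A)\le\epsilon$ and $a:=\pi(A)=b+\Delta$. Coarsening $\pi$ and $p$ to the partition $\{A,A^c\}$ and applying the data-processing inequality gives $D\ge d_{\mathrm{kl}}(a\|b)$. Assume $\Delta>\epsilon$, so that $b\le\epsilon<\Delta\le a$. I will use two standard monotonicity facts: for fixed $a$, the map $q\mapsto d_{\mathrm{kl}}(a\|q)$ is decreasing on $[0,a]$; and for fixed $q$, the map $x\mapsto d_{\mathrm{kl}}(x\|q)$ is increasing on $[q,1]$. Both follow from the sign of the derivatives $\frac{q-a}{q(1-q)}$ and $\log\frac{x(1-q)}{q(1-x)}$ respectively. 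Combining them,
\[
d_{\mathrm{kl}}(\Delta\|\epsilon)\le d_{\mathrm{kl}}(a\|\epsilon)\le d_{\mathrm{kl}}(a\|b)\le D .
\]
Hence $\Delta\in S$ and $\Delta\le\sup S$. The edge cases $b=0$ (which force $D=\infty$ when $a>0$) and $\epsilon=0$ are handled by the stated conventions for $0\log 0$ and $b\log(b/0)$.

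\textbf{Step 3 (vanishing capacity).} By the first part, it suffices to show that $\sup S\to0$ uniformly over $D\le C$ as $\epsilon\to0$. Take $\lambda\in S$ with $\lambda>\epsilon$. Using $(1-\lambda)\log\frac{1-\lambda}{1-\epsilon}\ge(1-\lambda)\log(1-\lambda)$ and $\lambda\log\lambda+(1-\lambda)\log(1-\lambda)\ge-\log 2$, we obtain
\[
d_{\mathrm{kl}}(\lambda\|\epsilon)\ge\lambda\log\tfrac1\epsilon-\log2 .
\]
Therefore every $\lambda\in S$ satisfies $\lambda\le\max\{\epsilon,(C+\log2)/\log(1/\epsilon)\}$. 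This bound does not depend on $\pi$ or $A$, and it tends to $0$ as $\epsilon\to0$, which gives \eqref{eq:capacity_vanish_clean_rewrite}.

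The main obstacle is Step 2. One must choose the right binary comparison and chain the two monotonicity properties in the correct directions: it requires $b\le\epsilon$, which is exactly why the case split on whether $m\in A$ is needed. It also requires reducing to the regime $\Delta>\epsilon$, which is made harmless by the observation that $\epsilon\in S$.
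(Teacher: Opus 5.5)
Your proof is correct. It shares the paper's overall structure: a case split on whether the mode $m$ lies in $A$, the bound $\Delta\le\epsilon$ when $m\in A$, the observation $\epsilon\in S$, and a lower bound of the form $d_{\mathrm{kl}}(\lambda\|\epsilon)\ge\lambda\log(1/\epsilon)-\mathrm{const}$ for the limit. The key step for $m\notin A$, however, goes a different way.

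The paper applies the data-processing inequality once, to the fixed coarsening $x\mapsto\mathbf 1\{x\neq m\}$. This gives $d_{\mathrm{kl}}(1-\pi(m)\|\epsilon)\le D$, i.e.\ $1-\pi(m)\in S$. It then simply notes that $\Delta\le\pi(A)\le 1-\pi(m)$ when $m\notin A$. That route needs no monotonicity of $d_{\mathrm{kl}}$, no reduction to $\Delta>\epsilon$, and no separate edge-case handling, and the coarsening does not depend on $A$.

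You instead coarsen along $\{A,A^c\}$ and then move from $(\pi(A),P_h(A))$ to $(\Delta,\epsilon)$ using the two monotonicity facts. This costs a little more work. In exchange it passes through the sharper per-set inequality $d_{\mathrm{kl}}(\pi(A)\|P_h(A))\le D$, which would be the right starting point if you wanted a bound in terms of $P_h(A)$ rather than the worst case $\epsilon$.

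For the vanishing part, your direct estimate $\sup S\le\max\{\epsilon,(C+\log 2)/\log(1/\epsilon)\}$ rests on the same inequality the paper uses; the paper has $-2/e$ where you have $-\log 2$. Your version gives an explicit $O(1/\log(1/\epsilon))$ rate, whereas the paper argues by contradiction.
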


\begin{remark}
The proof is provided in Appendix~\ref{app:proof_capacity_clean}. Theorem~\ref{thm:capacity_clean} characterizes a local bottleneck for KL-bounded redistribution around the human-like reference: when the next-token distribution is highly concentrated, only limited probability mass can be reassigned under a fixed KL budget. While purely distributional, this provides a tractable local proxy for compatible modification. In particular, taking \(A=\{x\}\) recovers the token-level case.
\end{remark}
Let $x_t^*\in\arg\max_{x\in\mathcal V}P_h(x\mid h_t)$, and define $c_t:=1-P_h(x_t^*\mid h_t)$. We refer to \(c_t\) as the \emph{local capacity mass}. By Theorem~\ref{thm:capacity_clean}, \(c_t\) quantifies the local bottleneck for semantics-consistent redistribution: a more concentrated human-like distribution leaves less probability mass available for reassignment under a fixed KL budget.

Moreover, if a local update preserves the dominant human-like continuation, i.e., $\pi_\theta(x_t^*\mid h_t)=P_h(x_t^*\mid h_t),$ then the local surrogate gain factorizes as
\begin{equation}
\resizebox{0.93\linewidth}{!}{$
\sum_{x\in\mathcal V}
\bigl(\pi_\theta(x\mid h_t)-P_h(x\mid h_t)\bigr)
\log\frac{P_{wm}(x\mid h_t)}{P_h(x\mid h_t)}
=
c_t
\sum_{x\neq x_t^*}
\frac{\pi_\theta(x\mid h_t)-P_h(x\mid h_t)}{c_t}
\log\frac{P_{wm}(x\mid h_t)}{P_h(x\mid h_t)}.
$}
\label{eq:capacity_factorization}
\end{equation}
Thus, $c_t$ serves as a local indicator of both the feasible amount of redistribution and the scale of the token-level surrogate gain. This suggests that token positions with larger $c_t$ may admit more flexible redistribution, thereby motivating a token-dependent weighting scheme.


\subsection{RLSpoofer: Capacity-Aware RL Attack for Watermark Spoofing }
\label{sec:rlspoofer}

\textbf{From surrogate objective to policy optimization.}
Equation~\ref{eq:J_clean} defines an expected sequence-level objective under the paraphraser-induced distribution over discrete generations, which naturally admits an episodic Markov decision process formulation with \(\pi_\theta\) serve as the policy. Motivated by this view, we propose RLSpoofer, a GRPO-based~\cite{shao2024deepseekmath} watermark spoofing attack that requires only limited sample pairs \((\mathbf{X}, \mathbf{X}'_{wm})\), where \(\mathbf{X}\) is a human-written text and \(\mathbf{X}'_{wm}\) is its semantic-preserving paraphrase generated by a watermarked model. Using such pairs, the attack steers the paraphrasing distribution \(P_\theta(\cdot \mid \mathbf{X})\) toward the watermarked distribution \(P_{wm}(\cdot \mid \mathbf{X})\) and away from the human-written distribution \(P_h(\cdot \mid \mathbf{X})\), thereby encouraging watermark-spoofed generations.

\textbf{Approximating the target distributions.}
Since the ground-truth watermarked and human-like distributions are not directly accessible, we approximate both using a \textit{lightweight} reference model \(\pi_{\mathrm{ref}}\). As LLMs are trained to model human text~\cite{grattafiori2024llama}, we approximate \(P_h\) by querying the \(\pi_{\mathrm{ref}}\) conditioned on the original input \(\mathbf{X}\), namely $P_h(x'_t \mid x'_{<t}, \mathbf{X}) \approx \pi_{\mathrm{ref}}(x'_t \mid x'_{<t}, \mathbf{X}).$
Similarly, motivated by the observation that weak rewriting often preserves both semantics and watermark patterns~\cite{kirchenbauer2023watermarkKGW}, we approximate the watermarked distribution by conditioning the same reference model on the watermarked instance \(\mathbf{X}'_{wm}\), i.e., $P_{wm}(x'_t \mid x'_{<t}, \mathbf{X}) \approx \pi_{\mathrm{ref}}(x'_t | x'_{<t}, \mathbf{X}'_{wm}).$ Intuitively, conditioning on the semantics-preserving rewrite $X'_{wm}$ provides a tractable proxy for the watermarked distribution's lexical and semantic preferences, yielding a tractable local approximation of $P_{wm}$.

\textbf{Reward design.}
Based on Eq.~\ref{eq:llr_decomposition_clean}, the token-level spoofing signal is naturally given by $\log\frac{P_{wm}(x'_t\mid h_t)}{P_h(x'_t\mid h_t)}.$ Moreover, Theorem~\ref{thm:capacity_clean} together with Eq.~\ref{eq:capacity_factorization} suggest that, under semantics-consistent local updates around \(P_h(\cdot\mid h_t)\), the attainable improvement of this signal is modulated by the local capacity mass \(c_t\). Motivated by this, we define the reward for a sampled token \(x'_t\sim\pi_\theta(\cdot\mid h_t)\) as $r_t:=c_t\log\frac{P_{wm}(x'_t\mid h_t)}{P_h(x'_t\mid h_t)}$, where $c_t = 1-\max_{x\in\mathcal V}P_h(x\mid h_t) \approx 1-\max_{x\in\mathcal V}\pi_{\mathrm{ref}}(x\mid x'_{<t}, \mathbf{X}).$ The quantity \(c_t\) represents the human-plausible mass outside the dominant continuation, thereby capturing the local room for lexical redistribution under semantic fidelity. As a result, \(r_t\) rewards tokens preferred by the surrogate watermarked distribution while attenuating updates at positions where the human-like distribution is already sharply concentrated.

However, token-level redistribution alone is insufficient, since a successful attack must also preserve semantics at the sequence level. We therefore introduce a sequence-level reward \(A\) based on the P-SP score~\citep{wieting2021paraphrastic}, followed by a sigmoid transformation to improve gradient sensitivity. Specifically, we compute the semantic similarity between the generated paraphrase and each of \(\mathbf{X}\) and \(\mathbf{X}'_{wm}\), and take the minimum as \(A\). This conservative design preserves the meaning of the original text while keeping the policy close to the watermarked rewrite, thereby improving the quality of the surrogate target for watermark-favored generation. We further normalize \(A\) within each rollout group to obtain \(\hat{A}_i=\frac{A_i-\mathrm{mean}(A)}{\mathrm{std}(A)}\), which stabilizes training and improves semantic consistency across samples.

\textbf{Training objective.}
Recent work has shown that distillation-based objectives can be effective for watermark spoofing~\cite{gu2023learnabilitydistilled,an2025ditto}. Accordingly, in addition to the reward design above, RLSpoofer also incorporates a cross-entropy term to anchor optimization toward the target watermarked distribution. Concretely, the attack policy \(\pi_\theta\) is iteratively updated by sampling a set of outputs \(\{x'_1, \dots, x'_G\}\) from the previous policy \(\pi_{\theta_{\mathrm{old}}}\), maximizing the following training objective:
\begin{equation}
\resizebox{0.85\linewidth}{!}{$
  \begin{aligned}
    \mathcal{J}(\theta) \approx\;
    \mathbb{E}_{\{x'_i\}\sim\pi_{\theta_{\mathrm{old}}}}
    \frac{1}{G} \sum_{i=1}^G \frac{1}{|x'_i|} \sum_{t=1}^{|x'_i|} &
    \Biggl[
      \frac{\pi_\theta(x'_{i,t}\mid \mathbf{X}, x'_{i,<t})}
           {\pi_{\theta_{\mathrm{old}}}(x'_{i,t}\mid \mathbf{X}, x'_{i,<t})}
      \bigl(w_1 \hat{A}_i + w_2 r_{i,t}\bigr) \\
    &\qquad
      - \beta D_{\mathrm{KL}}\bigl[\pi_\theta \,\|\, \pi_{\mathrm{ref}}\bigr]
    \Biggr]
    - w_3\,\mathcal{L}_{\mathrm{CE}}\bigl(\pi_\theta, \{(\mathbf{X}, \mathbf{X}'_{wm})\}\bigr), 
  \end{aligned}$}
\end{equation}
where \(\pi_{\mathrm{ref}}\) denotes the reference model, and \(w_1, w_2, w_3\) control the contributions of the respective components. The KL term regularizes the policy against the reference model at the token level, while \(\mathcal{L}_{\mathrm{CE}}\) denotes the teacher-forced cross-entropy loss for predicting \(\mathbf{X}'_{wm}\) conditioned on \(\mathbf{X}\).

\section{Experiments}
\label{sec:experiments}

\subsection{Experimental Setup}
We provide detailed experimental setups and configurations in Appendix~\ref{app:experimental_setup}.

\textbf{Victim Models and Attackers.} We use Llama3.1-8B-Instruct~\citep{grattafiori2024llama} as the victim model to generate watermarked paraphrases.  For attackers, we consider five lightweight models of varying sizes from three well-known model families: Qwen3-0.6B, Qwen3-1.7B, Qwen3-4B~\citep{yang2025qwen3}, Qwen2.5-3B-Instruct~\citep{qwen2025qwen25technicalreport}, and Llama3.2-3B-Instruct.

\textbf{Watermarking Schemes.} We evaluate six watermarking schemes drawn from both logit-based and sampling-based families. Specifically, the logit-based methods include EWD~\citep{lu2024entropyEWD}, SWEET~\citep{lee2023wroteSWEET}, KGW~\citep{kirchenbauer2023watermarkKGW}, and Unigram~\citep{zhao2023provableUnigram}. To capture recent advances, we additionally consider two sampling-based distortion-free methods: semantic watermark PMark~\citep{huo2025pmark} and the cryptographic PF-Watermark~\citep{zhao2024permutePF}. We implement PMark based on its official codebase, whereas the remaining schemes are implemented and detected using the widely adopted MarkLLM toolkit~\citep{pan2024markllm}.

\textbf{Datasets.}
We construct the training set from C4-RealNewslike subset~\cite{raffel2020exploring} and the test set from four datasets: Reddit WritingPrompts~\citep{verma2024ghostbuster}, LFQA~\citep{krishna2023paraphrasing}, and the BookReport and FakeNews subsets of MMW~\citep{piet2025markmywords}, following~\cite{jovanovic2024watermark}. Specifically, we prompt Qwen3-8B to generate human-like unwatermarked texts and use watermarked Llama3.1-8B-Instruct to rewrite them into semantically preserved watermarked responses. For each watermarking scheme, we use 100 unwatermarked--watermarked response pairs for training and 400 unwatermarked test samples drawn evenly from the four datasets. We standardize the length of all training and test samples to 500 tokens.

\textbf{Spoof Methods.} Under the black-box threat model, in addition to RLSpoofer, we consider three spoofing baselines to further reveal vulnerabilities in watermarking schemes. \textbf{Distill}~\cite{gu2023learnabilitydistilled} distills the distribution of the watermarked model using 10{,}000 huamn--watermarked rewritten pairs. \textbf{DITTO}~\cite{an2025ditto} further exploits the distilled model by analyzing its output distribution and reproducing the statistical preferences induced by the target watermark. Additionally, \textbf{DPO}~\cite{diaa2024optimizing} casts watermark spoofing as a preference alignment and optimizes the attacker on 7{,}000 samples preference pairs.

\textbf{Implementation Details of RLSpoofer.} RLSpoofer utilizes a reparameterized sigmoid scaling function for the semantic reward, with a threshold of 0.85 to separate positive and negative rewards, thereby encouraging higher-quality rephrasings. Training on 100 samples with a batch size of 48 and group size $G=12$ converges efficiently, completing in approximately 1.5 hours for Qwen3-4B and 45 minutes for Qwen3-0.6B on four NVIDIA Pro 6000 GPUs. In the experiments, we use the attack model itself as the reference model to generate the surrogate watermark distribution.

\textbf{Metric.} 
We primarily evaluate spoofing effectiveness using Spoof Success Rate (SSR), defined as the proportion of rephrased texts detected as watermarked under a semantic similarity constraint (P-SP > 0.7, following prior work~\citep{jovanovic2024watermark}). We also report Spoof Rate (SR), which omits the semantic constraint. Rephrasing quality is further assessed using perplexity and GPT-as-a-Judge (GPTS)~\citep{SIRA}.

\newpage
\subsection{Main Results}
\label{sec:main_Results_main}

In this subsection, we evaluate the spoofing resilience of different watermarking schemes across four attacks, with results shown in Table~\ref{tab:spoofmain}. Details of rephrase quality are provided in Appendix~\ref{app:main_results}.

\begin{table}[t]
  \centering
  \caption{Spoof Success Rate (SSR, \%) and P-SP scores across models and watermarking schemes. \textbf{Bold} indicates the \textit{best} SSR for each model. Higher SSR indicates stronger spoofing performance.}
  \label{tab:spoofmain}
  \small
  \setlength{\tabcolsep}{3.3pt} 
  \renewcommand{\arraystretch}{0.9} 
  \begin{tabular}{@{}l l cccccccccccccc}
    \toprule
    & & \multicolumn{2}{c}{\textbf{EWD}} & \multicolumn{2}{c}{\textbf{SWEET}} & \multicolumn{2}{c}{\textbf{KGW}} & \multicolumn{2}{c}{\textbf{Unigram}} & \multicolumn{2}{c}{\textbf{PF}} & \multicolumn{2}{c}{\textbf{PMark}} \\
    \cmidrule(lr){3-4} \cmidrule(lr){5-6} \cmidrule(lr){7-8} \cmidrule(lr){9-10} \cmidrule(lr){11-12} \cmidrule(lr){13-14} 
    \makecell[l]{\textbf{Models}} & \makecell[l]{\textbf{Methods}}
      & {\small\textbf{SSR}} & {\small\textbf{P-SP}} & {\small\textbf{SSR}} & {\small\textbf{P-SP}} & {\small\textbf{SSR}} & {\small\textbf{P-SP}} & {\small\textbf{SSR}} & {\small\textbf{P-SP}} & {\small\textbf{SSR}} & {\small\textbf{P-SP}} & {\small\textbf{SSR}} & {\small\textbf{P-SP}} \\
    \midrule
    \multirow{4}{*}{Qwen3-0.6B}
      & Distill        & 42.3 & 0.75 & 20.0 & 0.76 & 35.8 & 0.68 & 13.8 & 0.84 & 6.50 & 0.97 & 20.0 & 0.90 \\
      & DITTO        & 7.50 & 0.43 & 6.75 & 0.41 & 1.00 & 0.33 & 0.25 & 0.32 & 5.50 & 0.79 & 11.8 & 0.63  \\
      & DPO        & 0.25 & 0.57 & 0.00 & 0.76 & 1.00 & 0.66 & 0.25 & 0.32 & 2.50 & 0.57 & 6.25 & 0.63  \\
      & \textbf{RLSpoofer}   & \textbf{54.3} & 0.73 & \textbf{50.5} & 0.79 & \textbf{52.0} & 0.72 & \textbf{49.5} & 0.70 & \textbf{33.3} & 0.66 & \textbf{29.5} & 0.92  \\
    \midrule
    
    \multirow{4}{*}{Qwen3-1.7B}
      & Distill        & 43.8 & 0.80 & 26.8 & 0.79 & 44.5 & 0.75 & 19.5 & 0.81 & 7.00 & 0.96 & 20.3 & 0.90 \\
      & DITTO        & 21.5 & 0.53 & 29.0 & 0.56 & 13.8 & 0.52 & 1.50 & 0.36 & 7.50 & 0.86 & 16.5 & 0.68  \\
      & DPO        & 0.25 & 0.94 & 0.00 & 0.84 & 1.00 & 0.96 & 0.50 & 0.87 & 4.25 & 0.58 & 22.5 & 0.93  \\
      & \textbf{RLSpoofer}   & \textbf{53.5} & 0.76 & \textbf{52.0} & 0.71 & \textbf{52.0} & 0.71 & \textbf{54.8} & 0.73& \textbf{29.0} & 0.73 & \textbf{29.5} & 0.90  \\
    \midrule
    
    \multirow{4}{*}{Qwen3-4B}
      & Distill        & 51.3 & 0.81 & 37.3 & 0.82 & 57.0 & 0.79 & 28.0 & 0.80 & 6.00 & 0.96 & 21.5 & 0.92 \\
      & DITTO        & 56.0 & 0.68 & 43.3 & 0.66 & 36.5 & 0.61 & 2.25 & 0.39 & 3.50 & 0.88 & 16.5 & 0.71  \\
      & DPO        & 0.25 & 0.78 & 0.00 & 0.78 & 1.00 & 0.93 & 0.75 & 0.59 & 5.25 & 0.88 & 17.5 & 0.68  \\
      & \textbf{RLSpoofer}   & \textbf{56.5} & 0.73 & \textbf{52.3} & 0.75 & \textbf{58.0} & 0.75 & \textbf{54.8} & 0.74 & \textbf{62.0} & 0.77 & \textbf{36.3} & 0.91  \\
    \midrule
    
    \multirow{4}{*}{\makecell[l]{Qwen2.5-3B\\-Instruct}}
      & Distill        & \textbf{55.5} & 0.76 & 49.8 & 0.77 & \textbf{60.3} & 0.74 & 25.8 & 0.80 & 6.50 & 0.93 & 22.3 & 0.91 \\
      & DITTO        & 14.0 & 0.50 & 22.3 & 0.54 & 9.50 & 0.48 & 0.25 & 0.34 & 5.25 & 0.72 & 11.3 & 0.56  \\
      & DPO        & 0.00 & 0.88 & 0.00 & 0.87 & 1.25 & 0.87 & 2.50 & 0.78 & 6.25 & 0.83 & 24.3 & 0.95  \\
      & \textbf{RLSpoofer}   & 53.5 & 0.70 & \textbf{54.5} & 0.75 & 57.3 & 0.72 & \textbf{54.5} & 0.77 & \textbf{50.3} & 0.68 & \textbf{30.3} & 0.89  \\
    \midrule

    \multirow{4}{*}{\makecell[l]{Llama3.2-3B\\-Instruct}}
      & Distill        & 53.8 & 0.77 & 45.5 & 0.76 & \textbf{56.3} & 0.75 & 26.0 & 0.77 & 8.75 & 0.93 & 23.3 & 0.89 \\
      & DITTO        & 19.3 & 0.54 & 24.3 & 0.56 & 14.0 & 0.51 & 1.00 & 0.35 & 6.50 & 0.79 & 18.0 & 0.65  \\
      & DPO        & 2.50 & 0.49 & 0.50 & 0.53 & 0.75 & 0.36 & 7.75 & 0.60 & 6.25 & 0.67 & 25.0 & 0.87  \\
      & \textbf{RLSpoofer}   & \textbf{54.5} & 0.70 & \textbf{54.5} & 0.74 & 55.3 & 0.76 & \textbf{52.0} & 0.72 & \textbf{49.8} & 0.85 & \textbf{33.3} & 0.92  \\
    \midrule
    \bottomrule
  \end{tabular}
  \vspace{-0.3cm}
\end{table}

\begin{wraptable}{r}{0.34\textwidth}
  \centering
  \vspace{-0.4cm}
  \small  
  \setlength{\tabcolsep}{2.3pt}  
  \caption{Spoofing performance on EWD using Qwen3-4B.}
  \label{tab:logit_base_per}
  \begin{tabular}{@{}lcccc@{}}
    \toprule
    \textbf{Method}        & \textbf{SSR}   & \textbf{SR}   & \textbf{P-SP}   & \textbf{GPTS} \\
    \midrule
    Distill        & 51.3  & 62.5  & 0.81  & 7.66 \\
    DITTO     & 56.0  & \textbf{94.3}  & 0.68  & 6.66 \\
    DPO     & 0.25  & 0.50  & 0.78  & 7.55 \\
    RLSpoofer  & \textbf{56.5}  & 87.0  & 0.73  & 6.68 \\
    \bottomrule
  \end{tabular} 
\end{wraptable}
\textbf{Logit-based watermarks are vulnerable to baseline attacks.} 
We observe that across all attack settings, \textit{Distill} consistently achieves strong spoofing performance against logit-based watermarks. As shown in Table~\ref{tab:spoofmain}, Distill enables Qwen3-0.6B to achieve 42.3\% SSR and a P-SP score of 0.75 on EWD, significantly outperforming DITTO (7.5\%) and DPO (0.25\%), with similar trends observed across models. Furthermore, Table~\ref{tab:logit_base_per} shows that Distill and DITTO achieve spoof rates (SR) of 62.5\% and 94.3\%, respectively, on Qwen3-4B with EWD. This effectiveness arises because logit-based watermarking induces pronounced shifts in the output distribution~\cite{liu2024can}, which can be captured by distribution-matching approaches.
In contrast, DPO yields consistently low SSR, spoof rates, and P-SP scores across most watermarks, suggesting that preference-based optimization is less suitable for watermark spoofing.

\begin{wraptable}{r}{0.34\textwidth}
  \centering
  \vspace{-0.4cm}
  \small  
  \setlength{\tabcolsep}{2.3pt}  
  \caption{Spoofing performance on PF using Qwen3-4B.}
  \label{tab:samplePF_per}
  \begin{tabular}{@{}lcccc@{}}
    \toprule
    \textbf{Method}        & \textbf{SSR}   & \textbf{SR}   & \textbf{P-SP}   & \textbf{GPTS} \\
    \midrule
    Distill        & 6.00  & 6.50  & 0.96  & 7.67 \\
    DITTO     & 3.50  & 6.25  & 0.87  & 6.01 \\
    DPO     & 5.25  & 8.75  & 0.88  & 9.30 \\
    RLSpoofer  & \textbf{62.0}  & \textbf{82.8}  & 0.77  & 6.60 \\
    \bottomrule
  \end{tabular} 
\end{wraptable}
\textbf{Sampling-based distortion-free watermarks exhibit stronger resilience.}
We observe that all three baselines struggle to spoof sampling-based distortion-free watermarks effectively. Distill achieves only $\sim$7\% SSR on PF-Watermark and at most 23.3\% on PMark, with the other baselines showing similarly limited performance. Furthermore, as shown in Table~\ref{tab:samplePF_per}, all three methods obtain spoof rates below 9\% on PF-Watermark, indicating that they fail to replicate the watermark distribution. We hypothesize this resilience stems from the fact that PF and PMark watermarks are distortion-free in expectation~\cite{zhao2024permutePF}, rendering their latent signals exceptionally difficult to model from large training corpora.

\begin{figure}[t]
    \centering
    \includegraphics[width=\linewidth]{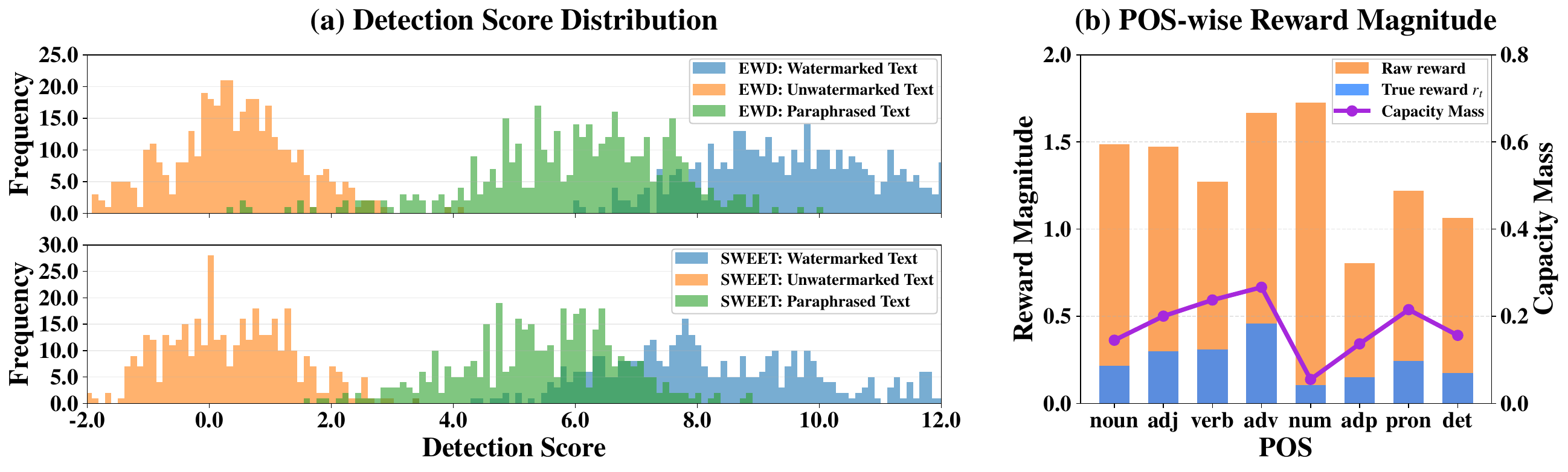}
    \caption{(a) illustrates the detection score distributions for EWD and SWEET watermarks across unwatermarked outputs, watermarked outputs, and paraphrased texts generated by Qwen3-4B trained with RLSpoofer; (b) shows the POS-wise reward magnitudes for SWEET watermark on Qwen3-4B.}
    \label{fig:detection_pos}
    \vspace{-0.3cm}
\end{figure}

\textbf{RLSpoofer consistently compromises the resilience of watermarking schemes.} We observe that, with only 100 training samples, RLSpoofer matches and surpasses baselines requiring 10,000 samples across all evaluated watermarks. Specifically, on SWEET, it enables Qwen3-0.6B to achieve a 50.5\% spoofing success rate (SSR), significantly exceeding Distill (20.0\%) and DITTO (6.75\%). Furthermore, RLSpoofer empowers Qwen3-4B to effectively spoof the PF watermark with a 62.0\% SSR, far surpassing the best baseline of 8.75\%. This efficacy stems from its ability to align with the distribution of watermarked text, successfully exploiting the sample-level distribution shifts induced even by distortion-free watermarks under a fixed key~\cite{liu2024can}. Our analysis of EWD and SWEET detection scores (400 samples, 500 tokens per input) confirms this alignment: as shown in Figure~\ref{fig:detection_pos}(a), RLSpoofer shifts the $z$-score distribution of its generated rephrasings toward that of watermarked text, maintaining clear separation from the unwatermarked distribution. Details are in Appendix~\ref{app:detection_scores}.


\subsection{Empirical Analysis of RLSpoofer}

In this subsection, we analyze the effectiveness of RLSpoofer on three complementary perspectives: feasible token-level watermark injection, sequence-level semantic preservation, and optimization stability. Additional details can be found in Appendix~\ref{app:local_capacity_mass} - \ref{app:ceanchor}.

\textbf{Local capacity mass identifies feasible room for watermark injection.}
Theorem~\ref{thm:capacity_clean} suggests that the local capacity mass \(c_t\) captures the local bottleneck of semantics-consistent redistribution, and thus favors positions with greater semantics-preserving substitutability. Since $c_t$ directly modulates token-level rewards, its effect should be reflected in the POS-wise distribution of $r_t$. We therefore plot POS-wise token-level rewards of the Qwen3-4B RLSpoofer on the SWEET training set. Figure~\ref{fig:detection_pos} (b) shows that the induced reward is small on numerals, which are typically critical for semantic fidelity, but larger on adjectives, verbs, and adverbs, which allow greater lexical flexibility. Thus, local capacity mass suppresses overly sharp rewards on semantically rigid tokens while amplifying rewards on more substitutable ones. This suggests that RLSpoofer exploits genuinely flexible positions rather than uniformly favoring all watermark-preferred tokens.

\begin{wraptable}{r}{0.32\textwidth}
  \centering
\small
\vspace{-0.4cm}
\caption{SSR on reward weights. }
\setlength{\tabcolsep}{2pt}  
\label{tab:capacity_weight}
\begin{tabular}{llcc}
\toprule
\textbf{Model} & \textbf{Weight} & \textbf{EWD} & \textbf{SWEET} \\
\midrule
\multirow{3}{*}{\makecell{Qwen3\\(0.6B)}}
& {1 - $p_{\max}$} & \textbf{54.3} & \textbf{50.5} \\
& uniform & 42.8 & 39.0 \\
& $p_{\max}$ & 40.5 & 29.8 \\
\midrule
\multirow{3}{*}{\makecell{Qwen3\\(4B)}}
& {1 - $p_{\max}$} & \textbf{56.5} & \textbf{52.3} \\
& uniform & 35.5 & 28.8 \\
& $p_{\max}$ & 28.0 & 25.0 \\
\bottomrule
\vspace{-0.5cm}
\end{tabular}
\end{wraptable}
We next verify that this weighting is not only intuitively aligned with semantic flexibility but also critical to spoofing performance. Specifically, we replace 1 - \(p_{\max}\) with either a uniform weight of 1 or the reverse weight \(p_{\max}\). Table~\ref{tab:capacity_weight} shows that both replacements consistently degrade performance. On EWD, uniform weighting and \(p_{\max}\) reduce the SSR of Qwen3-4B from 56.5\% to 35.5\% and 28.0\%, respectively; similar trends hold on SWEET and for Qwen3-0.6B. These results show that spoofing does not benefit from uniformly enlarging token-level rewards. Instead, it requires identifying contexts with sufficient semantically permissible redistribution room, where probability mass can be shifted toward watermark-preferred continuations without compromising semantic fidelity.

\begin{table}[t]
\centering
\small

\setlength{\tabcolsep}{5pt}  

\begin{minipage}[t]{0.49\textwidth}
\centering
\caption{SSR across semantic rewards.}
\label{tab:semantic_ablation}
\begin{tabular}{llcccc}
\toprule
\textbf{Model} & \textbf{Scheme} & \textbf{Min.} & \textbf{Avg.} & \textbf{Hum.} & \textbf{W.M.} \\
\midrule
\multirow{3}{*}{\makecell{Qwen3\\(0.6B)}}
& EWD   & \textbf{54.3} & 44.8 & 42.3 & 48.5 \\
& SWEET & \textbf{50.5} & 48.8 & 43.0 & 46.8 \\
& PF & \textbf{33.3} & 26.3  & 24.5 & 25.5 \\
\midrule
\multirow{3}{*}{\makecell{Qwen3\\(4B)}}
& EWD   & \textbf{56.5} & 47.3   & 47.5 & 45.5 \\
& SWEET & \textbf{52.3} & 34.3  & 47.0 & 48.5 \\
& PF & \textbf{62.0} & 50.3  & 48.5 & 51.0 \\
\bottomrule
\end{tabular}
\end{minipage}
\hfill
\begin{minipage}[t]{0.5\textwidth}
\centering
\caption{CE-anchor ablation (SSR vs. RLS.).}
\label{tab:ce_ablation}
\begin{tabular}{llcc}
\toprule
\textbf{Model} & \textbf{Scheme} & \textbf{W.O.} & \textbf{Dis.(100)} \\
\midrule
\multirow{3}{*}{\makecell{Qwen3\\(0.6B)}}
& EWD   & 29.3 {(-25.0)} & 0.25 {(-54.0)} \\
& SWEET & 27.3 {(-23.2)} & 0.25 {(-50.3)} \\
& PF & 12.3 {(-19.0)} & 0.00 {(-31.3)} \\
\midrule
\multirow{3}{*}{\makecell{Qwen3\\(4B)}}
& EWD   & 33.5 {(-23.0)} & 0.00 {(-56.5)} \\
& SWEET & 26.8 {(-25.5)} & 0.25 {(-52.0)} \\
& PF & 25.5 {(-36.5)} & 0.00 {(-62.0)} \\
\bottomrule
\end{tabular}
\end{minipage}
\end{table}

\textbf{A conservative semantic reward improves surrogate quality.}
Beyond identifying feasible token positions, successful spoofing must preserve the meaning of the original text while staying close to the watermarked rewrite, which serves as a surrogate for the watermark-favored distribution. We therefore use a conservative semantic reward ({Min.}), defined by the weaker of the two semantic matches. Compared with averaging the two matches ({Avg.}) or using only one reference ({Hum.}/{W.M.}), our design avoids one-sided alignment and produces a better surrogate target for optimization. As shown in Table~\ref{tab:semantic_ablation}, we observe Min. consistently achieves the best SSR across models and watermarking schemes, suggesting that successful spoofing benefits from preserving the original meaning while maintaining a better surrogate target for optimization.

\textbf{Cross-entropy (CE) anchoring stabilizes optimization.}
We observe that even with carefully designed token-level and semantic rewards, unconstrained policy optimization can still drift away from the base model and exploit brittle reward shortcuts. Cross-entropy anchoring mitigates this by keeping optimization aligned with the base-model distribution while still allowing watermark-favored redistribution. As shown in Table~\ref{tab:ce_ablation}, removing the anchor ({W.O.}) causes large SSR drops of 23.0 - 36.5 points, while replacing RL with supervised distillation on the same 100 training pairs ({Dis.(100)}) leads to near-complete failure. These results indicate that the gain comes from stable distributional alignment rather than limited-pair imitation alone.

\subsection{Ablation Study}

In this subsection, we study the sensitivity of RLSpoofer to the training data, the choice of surrogate distribution generated model, and its ability to generalize to OOD test data.


\begin{figure}[t]
    \centering

    \begin{minipage}[t]{0.35\textwidth}
        \vspace{0pt}
        \centering
        \small
        \setlength{\tabcolsep}{4pt}
        \renewcommand{\arraystretch}{1.03}

        \captionof{table}{SSR across Training set.}
        \label{tab:trainsamples}

        \begin{tabular}{@{}ll ccc@{}}
            \toprule
            & & \multicolumn{3}{c}{\textbf{Samples}} \\
            \cmidrule(lr){3-5}
            \textbf{Models} & \textbf{Scheme} & \textbf{50} & \textbf{100} & \textbf{200} \\
            \midrule
            \multirow{3}{*}{\makecell{Qwen3\\(0.6B)}} 
                & EWD   & 44.3 & 54.3 & 56.8 \\
                & SWEET & 42.5 & 50.5 & 51.3 \\
                & PF    & 12.3 & 31.3 & 20.5 \\
            \midrule
            \multirow{3}{*}{\makecell{Qwen3\\(4B)}} 
                & EWD   & 46.5 & 56.5 & 58.5 \\
                & SWEET & 43.0 & 52.3 & 54.3 \\
                & PF    & 20.8 & 60.8 & 25.3 \\
            \bottomrule
        \end{tabular}
    \end{minipage}
    \hfill
    \begin{minipage}[t]{0.62\textwidth}
        \vspace{0pt}
        \centering
        \includegraphics[width=0.93\linewidth]{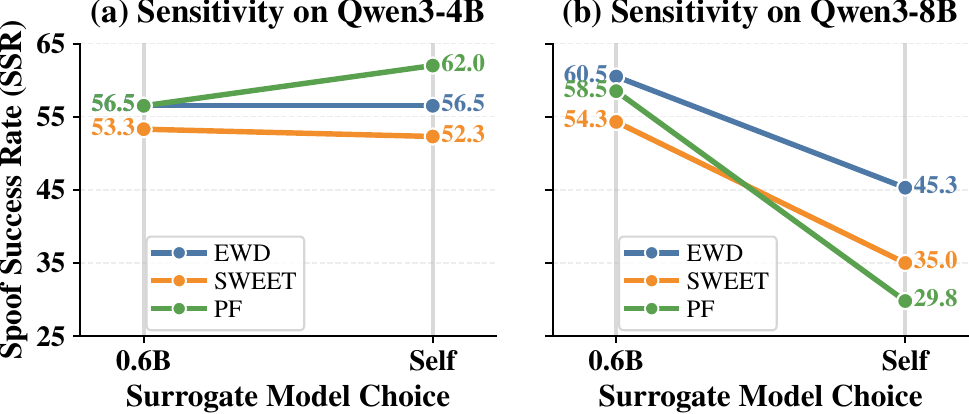}
        \caption{(a) and (b) compare SSR of RLSpoofer under two surrogate choices, Qwen3-0.6B (0.6B) and attack model (Self).}
        \label{fig:proxy_cross}
        \vspace{-0.3cm}
    \end{minipage}
\end{figure}

\textbf{Small training sets suffice for strong RLSpoofer performance.}
We examine the impact of training set size on RLSpoofer under three watermarking schemes, EWD, SWEET, and PF, using Qwen3-0.6B and Qwen3-4B. Table~\ref{tab:trainsamples} shows that RLSpoofer can achieve strong spoofing performance with only 50 training samples. For instance, on EWD, Qwen3-4B attains an SSR of 46.5\%, which increases to 58.5\% with 200 samples. A similar pattern holds for SWEET. However, for the PF watermark, enlarging the training set does not yield consistent gains in SSR. This is likely because PF's distortion-free design creates a distribution gap that becomes harder to capture with more data. Overall, these findings confirm that relatively small training sets are sufficient for RLSpoofer to obtain strong spoofing performance. We provide details in Appendix~\ref{app:training_set_size}.

\textbf{RLSpoofer exhibits increased sensitivity to surrogate selection as attacker capacity grows.} We evaluate how the surrogate model used to approximate the target watermarked distribution impacts RLSpoofer's efficacy. Specifically, we evaluate two surrogates, Qwen3-0.6B (0.6B) and the base attacker (Self), against EWD, SWEET, and PF watermarks across the 4B and 8B Qwen3 attackers.
Fig.~\ref{fig:proxy_cross} demonstrates that surrogate selection minimally affects the weaker Qwen3-4B, which yields comparable SSR across all settings. Conversely, the stronger Qwen3-8B is highly sensitive: the 0.6B surrogate consistently outperforms the Self surrogate. For instance, on the PF watermark, utilizing 0.6B instead of Self elevates the SSR from 29.8\% to 58.5\%, with substantial gains similarly observed for EWD and SWEET. We attribute this disparity to Qwen3-8B's superior watermark-removal capabilities~\cite{huang2025rlcracker}; its rewrites heavily weaken the embedded signal, rendering the model itself an ineffective proxy for the target watermark distribution. Details are in Appendix~\ref{app:sensitivity}.

\begin{wrapfigure}{r}{0.3\linewidth}
    \centering
    \vspace{-0.4cm}
    \includegraphics[width=\linewidth]{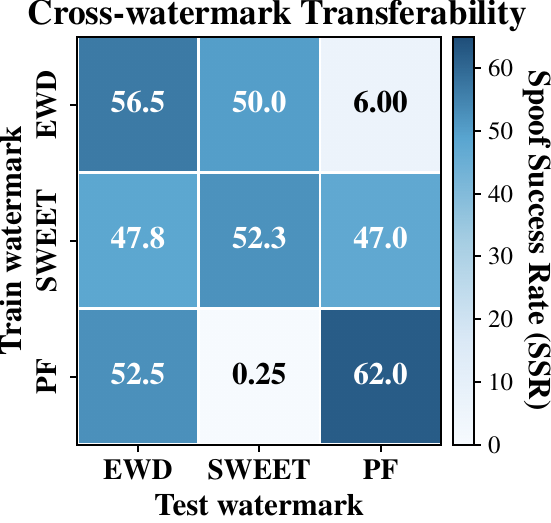}
    \caption{Cross watermark transferability on Qwen3-4B.}
    \label{fig:crosswm}
    \vspace{-0.3cm}
\end{wrapfigure}
\textbf{RLSpoofer demonstrates strongly asymmetric and directional cross-watermark transfer.}
We evaluate zero-shot transferability with Qwen3-4B by training RLSpoofer on one watermarking scheme and testing it on the others. As shown in Fig.~\ref{fig:crosswm}, transferability is not strictly constrained by watermark family. Although the KGW-style, logit-based watermarks EWD and SWEET~\cite{lu2024entropyEWD,lee2023wroteSWEET} exhibit substantial bidirectional transfer (SSR $\sim$ 47.8\% - 50\%), interactions involving the sampling-based PF watermark are notably directional. Specifically, RLSpoofer transfers effectively from SWEET to PF (SSR 47\%) and from PF to EWD (52.5\%), whereas the reverse directions are far weaker: PF to SWEET achieves only 0.25\% SSR, and EWD to PF only 6\%. This pronounced asymmetry suggests that transferability is not determined solely by mechanism-level similarity, but also by more intricate and directional overlaps in the vulnerabilities induced by different watermarking schemes. We provide details in Appendix~\ref{app:transferability}.

\section{Related Works}
\textbf{LLM watermarks.}
As LLMs become increasingly widespread, the human-written and machine-generated text grows harder to distinguish. By embedding imperceptible yet algorithmically detectable signals into generated text, watermarking has become important tool for content attribution, copyright protection, and the mitigation of malicious misuse~\citep{wu2025survey,liu2024preventing,zhang2024remark}. Existing methods mainly follow two paradigms. \textit{Logit-based} approaches~\citep{kirchenbauer2023watermarkKGW,lu2024entropyEWD} use a secret pseudo-random hash key to partition the vocabulary into \textit{green} and \textit{red} token lists at each generation step. They then bias the logits of green-list tokens to increase their sampling probability, and detection is performed through statistical tests for green-token overrepresentation~\citep{jovanovic2024watermark}. Conversely, \textit{sampling-based} methods~\citep{zhao2024permutePF,huo2025pmark, hou2024k} embed watermark signals directly into the sampling process, rather than modifying logits explicitly, offering finer-grained control while reducing quality degradation. Across both paradigms, recent advances prioritize minimizing statistical distortion, ensuring watermarked outputs remain indistinguishable from natural text~\citep{kuditipudi2023robust}.

\textbf{LLM watermark spoofing attack.}
LLM watermark \textit{spoofing attacks} seek to forge a target watermark signal and thereby falsely attribute arbitrary text to a specific model provider. Existing spoofing methods can generally be grouped into three categories. The first, \textit{piggyback spoofing}~\cite{pang2024attacking}, makes subtle edits to authentically watermarked text in order to inject toxic content while preserving detector confidence. However, this strategy substantially limits the attacker’s semantic flexibility. 
The second, \textit{feedback-guided spoofing} attacks~\cite{pang2024attacking,zhou2024bileve}, rely on repeated interaction with the watermark detector during generation to search for text that remains watermarked. Such attacks are typically query-intensive and depend on detector access, limiting their practical applicability. 
The third, \textit{learning-based spoofing}~\cite{jovanovic2024watermark,an2025ditto,gu2023learnabilitydistilled}, queries the target model to construct a training set for a surrogate model that internalizes the watermark signal. However, these approaches typically require either the knowledge of the watermarking mechanism~\cite{jovanovic2024watermark} or a large amount of training data (e.g., up to 10K samples~\cite{an2025ditto}). Consequently, the restrictive assumptions and inefficiencies of these approaches make them unsuitable for practically evaluating the spoofing resilience of watermarking schemes.

\section{Conclusion} 
In this paper, we study LLM watermark spoofing attack from a distributional perspective. We establish a local capacity bottleneck to characterize how token-level probability mass can be redistributed under KL-bounded, semantics-consistent local updates. Building on this insight, we propose RLSpoofer, a black-box RL-based spoofing attack that requires only 100 human–watermarked paraphrase pairs and no access to watermark internals or detectors. Across the evaluated setting of five attacker models and six watermarking schemes, RLSpoofer achieves strong spoofing performance on both logit-based and sampling-based watermarks. These results suggest that current LLM watermarking designs may remain vulnerable to spoofing attacks. We hope RLSpoofer can serve as a useful stress test for future watermark evaluation and for the development of more spoofing-resistant schemes.

\newpage
\bibliographystyle{unsrtnat}
\bibliography{reference}

\newpage
\appendix

\section{Details for the distributional surrogate}

\subsection{Absolute continuity and finite KL divergences}
\label{app:assumption}


In this subsection, we state the regularity conditions under which the distributional surrogate in Section~\ref{sec:preliminaries} is well defined and can be expanded into the expected log-likelihood-ratio objective in Eq.~\ref{eq:J_clean}. These conditions ensure that the relevant KL divergences are finite and that the log-likelihood ratio is integrable under the attack distribution.

\begin{assumption}[Absolute continuity and finite KL divergences]
\label{ass:regularity_clean}
For a fixed input \(\mathbf{X}\), \(P_\theta(\cdot \mid \mathbf{X})\) is absolutely continuous with respect to both reference distributions (\(P_\theta \ll P_h\) and \(P_\theta \ll P_{wm}\)). Furthermore, the log-likelihood ratio is integrable: \(\mathbb{E}_{\mathbf{X}' \sim P_\theta} [| \log(P_{wm}(\mathbf{X}' \mid \mathbf{X}) / P_h(\mathbf{X}' \mid \mathbf{X})) |] < \infty\). Moreover, $D_{\mathrm{KL}}\!\big(P_\theta(\cdot \mid \mathbf{X}) \,\|\, P_h(\cdot \mid \mathbf{X})\big) < \infty$ and $D_{\mathrm{KL}}\!\big(P_\theta(\cdot \mid \mathbf{X}) \,\|\, P_{wm}(\cdot \mid \mathbf{X})\big) < \infty.$
\end{assumption}

Under Assumption~\ref{ass:regularity_clean}, both KL terms in the surrogate objective are well defined, and their difference can be expanded as
\[
D_{\mathrm{KL}}\!\bigl(P_\theta \,\|\, P_h\bigr)
-
D_{\mathrm{KL}}\!\bigl(P_\theta \,\|\, P_{wm}\bigr)
=
\mathbb{E}_{\mathbf{X}' \sim P_\theta(\cdot \mid \mathbf{X})}
\left[
\log \frac{P_{wm}(\mathbf{X}' \mid \mathbf{X})}{P_h(\mathbf{X}' \mid \mathbf{X})}
\right],
\]
where we suppress the conditioning on \(\mathbf{X}\) for readability.

\subsection{Proof of Theorem~\ref{thm:capacity_clean}}\label{app:proof_capacity_clean}
In this subsection, we use the binary KL divergence
\[
d_{\mathrm{kl}}(p\|q)
:=
p\log\frac{p}{q}
+
(1-p)\log\frac{1-p}{1-q},
\qquad p,q\in[0,1],
\]
with the standard conventions $0\log(0/q):=0$ for $q\in[0,1]$, $0\log(0/0):=0$, and $b\log(b/0):=+\infty$ for $b>0$.

For convenience, we restate Theorem~\ref{thm:capacity_clean}.

\begin{theorem}[Local capacity characterization]
\label{thm:capacity_clean_restate}
Fix a history $h_t$. For any next-token distribution $\pi_\theta(\cdot\mid h_t)$ and any subset $A\subseteq\mathcal V$, we have
\begin{equation}
\resizebox{0.93\linewidth}{!}{$
\pi_\theta(A\mid h_t)-P_h(A\mid h_t)
\le
\sup\Bigl\{
\lambda\in[0,1]:
d_{\mathrm{kl}}\!\Bigl(\lambda \,\Big\|\, 1-\max_{x\in\mathcal V}P_h(x\mid h_t)\Bigr)
\le
D_{\mathrm{KL}}\!\bigl(\pi_\theta(\cdot\mid h_t)\,\|\,P_h(\cdot\mid h_t)\bigr)
\Bigr\}.
$}
\label{eq:capacity_bound_clean_restate}
\end{equation}
Moreover, for any fixed $C\ge 0$,
\begin{equation}
\lim_{1-\max_{x\in\mathcal V}P_h(x\mid h_t)\to 0}
\sup_{\pi_\theta:\,D_{\mathrm{KL}}(\pi_\theta(\cdot\mid h_t)\,\|\,P_h(\cdot\mid h_t))\le C}
\sup_{A\subseteq\mathcal V}
\bigl(\pi_\theta(A\mid h_t)-P_h(A\mid h_t)\bigr)=0.
\label{eq:capacity_vanish_clean_restate}
\end{equation}
\end{theorem}

\begin{proof}
Fix a history $h_t$, and let
$x_t^*\in\arg\max_{x\in\mathcal V}P_h(x\mid h_t)$ and
$q:=1-P_h(x_t^*\mid h_t)=1-\max_{x\in\mathcal V}P_h(x\mid h_t)$.
Thus, $q$ is the total mass assigned by $P_h(\cdot\mid h_t)$ to all tokens other than the dominant human-like continuation $x_t^*$.

We first prove Eq.~\eqref{eq:capacity_bound_clean_restate}. Consider the measurable map
$T:\mathcal V\to\{0,1\}$ defined by $T(x)=\mathbf 1\{x\neq x_t^*\}$, which induces the binary partition $\{x_t^*\}$ and $\mathcal V\setminus\{x_t^*\}$. Under $P_h(\cdot\mid h_t)$, the pushforward distribution of $T$ is $(1-q,q)$, whereas under $\pi_\theta(\cdot\mid h_t)$ it is $\bigl(\pi_\theta(x_t^*\mid h_t),\,1-\pi_\theta(x_t^*\mid h_t)\bigr)$. By the data processing inequality for KL divergence under the map $T$,
\begin{equation}
d_{\mathrm{kl}}\!\bigl(1-\pi_\theta(x_t^*\mid h_t)\,\|\,q\bigr)
\le
D_{\mathrm{KL}}\!\bigl(\pi_\theta(\cdot\mid h_t)\,\|\,P_h(\cdot\mid h_t)\bigr).
\label{eq:dpi_capacity_clean_restate}
\end{equation}
Hence,
\begin{equation}
1-\pi_\theta(x_t^*\mid h_t)
\le
\sup\Bigl\{
\lambda\in[0,1]:
d_{\mathrm{kl}}(\lambda\|q)
\le
D_{\mathrm{KL}}\!\bigl(\pi_\theta(\cdot\mid h_t)\,\|\,P_h(\cdot\mid h_t)\bigr)
\Bigr\}.
\label{eq:off_mode_mass_bound_clean_restate}
\end{equation}
Moreover, since $d_{\mathrm{kl}}(q\|q)=0$, we also have
\begin{equation}
q
\le
\sup\Bigl\{
\lambda\in[0,1]:
d_{\mathrm{kl}}(\lambda\|q)
\le
D_{\mathrm{KL}}\!\bigl(\pi_\theta(\cdot\mid h_t)\,\|\,P_h(\cdot\mid h_t)\bigr)
\Bigr\}.
\label{eq:q_feasible_clean_restate}
\end{equation}

Now fix any subset $A\subseteq\mathcal V$. We distinguish two cases.

\paragraph{Case 1: $x_t^*\notin A$.}
Since $A\subseteq \mathcal V\setminus\{x_t^*\}$, we have $\pi_\theta(A\mid h_t)\le 1-\pi_\theta(x_t^*\mid h_t)$. Therefore,
\[
\pi_\theta(A\mid h_t)-P_h(A\mid h_t)
\le
\pi_\theta(A\mid h_t)
\le
1-\pi_\theta(x_t^*\mid h_t).
\]

\paragraph{Case 2: $x_t^*\in A$.}
Then $A^c\subseteq \mathcal V\setminus\{x_t^*\}$, so $P_h(A^c\mid h_t)\le q$. Using
$\pi_\theta(A\mid h_t)-P_h(A\mid h_t)=P_h(A^c\mid h_t)-\pi_\theta(A^c\mid h_t)$,
we obtain
\[
\pi_\theta(A\mid h_t)-P_h(A\mid h_t)
\le
P_h(A^c\mid h_t)
\le q.
\]

Combining the two cases yields
\begin{equation}
\pi_\theta(A\mid h_t)-P_h(A\mid h_t)
\le
\max\Bigl\{1-\pi_\theta(x_t^*\mid h_t),\,q\Bigr\}.
\label{eq:set_shift_by_capacity_clean_restate}
\end{equation}
Finally, Eqs.~\eqref{eq:off_mode_mass_bound_clean_restate}, \eqref{eq:q_feasible_clean_restate}, and \eqref{eq:set_shift_by_capacity_clean_restate} imply
\[
\pi_\theta(A\mid h_t)-P_h(A\mid h_t)
\le
\sup\Bigl\{
\lambda\in[0,1]:
d_{\mathrm{kl}}(\lambda\|q)
\le
D_{\mathrm{KL}}\!\bigl(\pi_\theta(\cdot\mid h_t)\,\|\,P_h(\cdot\mid h_t)\bigr)
\Bigr\},
\]
which is exactly Eq.~\eqref{eq:capacity_bound_clean_restate}.

We next prove Eq.~\eqref{eq:capacity_vanish_clean_restate}. Fix $C\ge 0$, and define
\[
\rho(q;C):=
\sup\Bigl\{
\lambda\in[0,1]:
d_{\mathrm{kl}}(\lambda\|q)\le C
\Bigr\}.
\]
By Eq.~\eqref{eq:capacity_bound_clean_restate}, it suffices to show that
\begin{equation}
\rho(q;C)\to 0
\qquad\text{as } q\to 0.
\label{eq:rho_zero_capacity_clean_restate}
\end{equation}

Suppose otherwise. Then there exist $\varepsilon>0$, a sequence $q_n\to 0$, and $\lambda_n\in[\varepsilon,1]$ such that $d_{\mathrm{kl}}(\lambda_n\|q_n)\le C$ for all $n$. Since
\[
d_{\mathrm{kl}}(\lambda_n\|q_n)
=
\lambda_n\log\frac{\lambda_n}{q_n}
+
(1-\lambda_n)\log\frac{1-\lambda_n}{1-q_n},
\]
and the second term is bounded below by $(1-\lambda_n)\log(1-\lambda_n)$, we obtain
\[
d_{\mathrm{kl}}(\lambda_n\|q_n)
\ge
\lambda_n\log\frac{1}{q_n}
+
\lambda_n\log\lambda_n
+
(1-\lambda_n)\log(1-\lambda_n).
\]
Using the elementary bound $x\log x\ge -1/e$ on $[0,1]$ and $\lambda_n\ge\varepsilon$, it follows that
\[
d_{\mathrm{kl}}(\lambda_n\|q_n)
\ge
\varepsilon\log\frac{1}{q_n}-\frac{2}{e}.
\]
The right-hand side tends to $+\infty$ as $q_n\to 0$, contradicting $d_{\mathrm{kl}}(\lambda_n\|q_n)\le C$. This proves Eq.~\eqref{eq:rho_zero_capacity_clean_restate}.

Finally, for every $\pi_\theta$ satisfying
$D_{\mathrm{KL}}\!\bigl(\pi_\theta(\cdot\mid h_t)\,\|\,P_h(\cdot\mid h_t)\bigr)\le C$,
Eq.~\eqref{eq:capacity_bound_clean_restate} yields
\[
\sup_{A\subseteq\mathcal V}
\bigl(\pi_\theta(A\mid h_t)-P_h(A\mid h_t)\bigr)
\le
\rho(q;C).
\]
Taking the supremum over all such $\pi_\theta$, and then letting
$q=1-\max_{x\in\mathcal V}P_h(x\mid h_t)\to 0$,
Eq.~\eqref{eq:rho_zero_capacity_clean_restate} gives Eq.~\eqref{eq:capacity_vanish_clean_restate}.
\end{proof}

\section{Experimental Setup and Configuration}
\label{app:experimental_setup}
\subsection{Watermark algorithm setting}
\label{appen::algorithm_settings}

In this subsection, we report the hyperparameter settings for the watermarking algorithms evaluated in Section~\ref{sec:experiments}. For consistency and reproducibility, we use the official PMark~\cite{huo2025pmark} codebase\footnote{\url{https://github.com/PMark-repo/PMark}} with its default settings for both watermark generation and evaluation. For all other watermarking methods, we use the MarkLLM toolkit~\citep{pan2024markllm}\footnote{\url{https://github.com/THU-BPM/MarkLLM}} to generate watermarked text. Specifically, for KGW, we follow~\cite{an2025ditto} and set $\delta=3$, while for EWD, SWEET, PF, and Unigram, we use the default configurations provided by MarkLLM. MarkLLM is widely adopted in the watermarking literature due to its robustness and ease of integration.

\begin{tcolorbox}[
    title=Hyperparameters for the PMark watermark,
    colframe=gray, 
    colback=gray!15,
    coltitle=gray,
    fonttitle=\bfseries\color{white},
    rounded corners,
    enhanced,
    left=15pt, right=6pt, top=6pt, bottom=6pt,
    boxrule=1pt,
    arc=6pt,
    width=\linewidth
]
"algorithm\_name": "PMark",\\
"num\_samples": 64,\\
"pivot": rand,\\
"median\_method": prior,\\
"msig": 2
\end{tcolorbox}

\begin{tcolorbox}[
    title=Hyperparameters for the KGW watermark,
    colframe=gray, 
    colback=gray!15,
    coltitle=gray,
    fonttitle=\bfseries\color{white},
    rounded corners,
    enhanced,
    left=15pt, right=6pt, top=6pt, bottom=6pt,
    boxrule=1pt,
    arc=6pt,
    width=\linewidth
]
"algorithm\_name": "KGW",\\
"gamma": 0.5,\\
"delta": 3.0,\\
"hash\_key": 15485863,\\
"prefix\_length": 1,\\
"z\_threshold": 4.0,\\
"f\_scheme": "time",\\
"window\_scheme": "left"
\end{tcolorbox}

\begin{tcolorbox}[
    title=Hyperparameters for the EWD watermark,
    colframe=gray, 
    colback=gray!15,
    coltitle=gray,
    fonttitle=\bfseries\color{white},
    rounded corners,
    enhanced,
    left=15pt, right=6pt, top=6pt, bottom=6pt,
    boxrule=1pt,
    arc=6pt,
    width=\linewidth
]
"algorithm\_name": "EWD",\\
"gamma": 0.5,\\
"delta": 2.0,\\
"hash\_key": 15485863,\\
"prefix\_length": 1,\\
"z\_threshold": 4.0
\end{tcolorbox}

\begin{tcolorbox}[
    title=Hyperparameters for the SWEET watermark,
    colframe=gray, 
    colback=gray!15,
    coltitle=gray,
    fonttitle=\bfseries\color{white},
    rounded corners,
    enhanced,
    left=15pt, right=6pt, top=6pt, bottom=6pt,
    boxrule=1pt,
    arc=6pt,
    width=\linewidth
]
"algorithm\_name": "SWEET",\\
    "gamma": 0.5,\\
    "delta": 2.0,\\
    "hash\_key": 15485863,\\
    "z\_threshold": 4.0,\\
    "prefix\_length": 1,\\
    "entropy\_threshold": 0.9
\end{tcolorbox}

\begin{tcolorbox}[
    title=Hyperparameters for the PF watermark,
    colframe=gray, 
    colback=gray!15,
    coltitle=gray,
    fonttitle=\bfseries\color{white},
    rounded corners,
    enhanced,
    left=15pt, right=6pt, top=6pt, bottom=6pt,
    boxrule=1pt,
    arc=6pt,
    width=\linewidth
]
    "algorithm\_name": "PF",\\
    "ngram": 8,\\
    "seed": 0,\\
    "seeding": "hash",\\
    "salt\_key": 35317,\\
    "payload": 0,\\
    "max\_seq\_len": 8192
\end{tcolorbox}

\begin{tcolorbox}[
    title=Hyperparameters for the Unigram watermark,
    colframe=gray, 
    colback=gray!15,
    coltitle=gray,
    fonttitle=\bfseries\color{white},
    rounded corners,
    enhanced,
    left=15pt, right=6pt, top=6pt, bottom=6pt,
    boxrule=1pt,
    arc=6pt,
    width=\linewidth
]
    "algorithm\_name": "Unigram",\\
    "gamma": 0.5,\\
    "delta": 2.0,\\
    "hash\_key": 15485863,\\
    "z\_threshold": 4.0
\end{tcolorbox}

\subsection{Dataset Construction.}
\label{app:datatemplate}

We construct the training set from the C4-RealNewslike subset~\cite{raffel2020exploring} and the test set from Reddit WritingPrompts~\citep{verma2024ghostbuster}, LFQA~\citep{krishna2023paraphrasing}, and the BookReport and FakeNews subsets of MMW~\citep{piet2025markmywords}, following~\cite{jovanovic2024watermark}. To build training data, we first prompt Qwen3-8B to generate 500-token human-like unwatermarked texts. We then prompt watermarked Llama3.1-8B-Instruct to rewrite each text while preserving its semantics, yielding paired data of the form (human-like text, watermarked rewrite). For RLSpoofer, we use 100 such pairs for training and 20 for validation on each watermark. 

For DPO, we additionally obtain a rejected response by prompting Qwen3-8B to produce a non-watermarked rewrite of the same human-written text. The resulting preference pair consists of the watermarked rewrite as the chosen response and the non-watermarked rewrite as the rejected response. For distillation, we cast the task as supervised rewriting: the human-like text is formatted with the prompt template shown below, following~\cite{huang2025rlcracker}, and the corresponding watermarked rewrite is used as the target output for SFT.

\begin{tcolorbox}[
    title=Prompt template used,
    colframe=gray, 
    colback=gray!15,
    coltitle=gray,
    fonttitle=\bfseries\color{white},
    rounded corners,
    enhanced,
    left=6pt, right=6pt, top=6pt, bottom=6pt,
    boxrule=1pt,
    arc=6pt,
    width=\linewidth
]
\detokenize{
###Target Text: 
  }{ [human-written text]} 
  
\detokenize{
###Instruction: Rewrite the target text above using different words but keeping the same meaning and similar length. 
  }
  
\detokenize{
###Your Response:
  }
\end{tcolorbox}

For evaluation, we sample 400 unwatermarked test instances evenly from the four benchmark datasets. All training and test samples are standardized to 500 tokens.

\subsection{Baselines.}

\textbf{Spoof Methods.} Under the black-box threat model, in addition to RLSpoofer, we consider three spoofing baselines to further expose vulnerabilities in watermarking schemes. 
\textbf{Distill}~\cite{gu2023learnabilitydistilled} learns to imitate the rewriting distribution of the watermarked model via supervised fine-tuning on human--watermarked rewritten pairs. 
\textbf{DITTO}~\cite{an2025ditto} further exploits the distilled model by analyzing its output distribution and reproducing the statistical preferences induced by the target watermark. 
\textbf{DPO}~\cite{diaa2024optimizing} formulates watermark spoofing as a preference alignment problem, where the attacker is optimized to prefer watermarked rewrites over non-watermarked ones.

\textbf{Distill.} We implement Distill using LLaMA-Factory\footnote{\url{https://github.com/hiyouga/LLaMA-Factory}}~\cite{zheng2024llamafactory} and fine-tune each attacker model on 10{,}000 human--watermarked rewritten pairs. The task is formulated as a supervised rewriting problem: given an input human-written text wrapped with the prompt template shown above, the model is trained to generate the corresponding watermarked rewrite. We perform full-parameter fine-tuning for each model, using a learning rate of $2\times10^{-5}$ and a batch size of 128.

\textbf{DITTO.} We implement DITTO using its official codebase~\footnote{\url{https://github.com/hsannn/ditto}} and follow its default settings. In particular, DITTO is applied on top of the model obtained from the Distill stage, rather than being trained from scratch. That is, it starts from the distilled spoofing model and further enhances spoofing performance by matching the statistical preferences induced by the target watermark.

\textbf{DPO.} We also implement DPO using LLaMA-Factory. Following prior work, we construct 7{,}000 preference pairs, where the chosen response is the watermarked rewrite and the rejected response is a non-watermarked rewrite of the same human-written text. We perform full-parameter fine-tuning for each model with $\beta=0.1$, a batch size of 64, and a learning rate of $5\times10^{-6}$. We additionally tune the learning rate over $\{2\times10^{-5},\, 2\times10^{-6},\, 2\times10^{-7}\}$ and find that $5\times10^{-6}$ yields the best overall performance.

\textbf{Evaluation Details.} For evaluation, all spoofing methods use deterministic decoding with temperature $=0$. For Distill and DPO, we conduct inference with the vLLM engine to accelerate generation. For DITTO, we use the inference pipeline provided in its official implementation rather than vLLM. In all cases, the spoofing model takes the original input text and produces a rewritten response of similar semantics and length, which is then evaluated by the corresponding watermark detector. All reported spoofing results are obtained under this deterministic decoding setting.

\subsection{Implementation Details of RLSpoofer.}

RLSpoofer is implemented on top of the GRPO~\citep{shao2024deepseekmath} module in the TRL~\cite{vonwerra2022trl} library. The training data consists only of human--watermarked rewrite pairs \((\mathbf{X}, \mathbf{X}'_{wm})\), where \(\mathbf{X}\) is a human-written text and \(\mathbf{X}'_{wm}\) is its semantics-preserving rewrite produced by the target watermarked model. Given \(\mathbf{X}\), the attack policy \(\pi_\theta\) samples a group of \(G\) candidate rewrites \(\{x'_1,\dots,x'_G\}\), and is optimized by jointly combining a sequence-level semantic reward, a capacity-aware token-level reward, a token-wise KL regularizer, and a teacher-forced cross-entropy anchor. Next, we introduce the implementation details of each of the components in RLSpoofer.

\textbf{Reward Components.} We incorporate two reward signals to encourage semantic preservation and effective watermark spoofing:

\begin{itemize}[leftmargin=*, itemsep=1pt, parsep=0pt, topsep=0pt]
    \item \textbf{Semantic Reward.} The semantic reward \(A_i\in[-1,1]\) is computed based on the P-SP score~\citep{wieting2021paraphrastic} between the generated output and both the original watermarked response \(\mathbf{X}'_{wm}\) and the original human-written text \(\mathbf{X}\). Specifically, we first compute the semantic similarity between the generated output and each reference, and then take the \textit{minimum} of the two scores as the final semantic score. To enhance gradient flow and emphasize semantic fidelity, we apply a sigmoid-based scaling with a threshold of \(0.85\) following~\cite{huang2025rlcracker}. This maps semantic scores in the range \([0.7,1.0]\) to reward values between \(-1\) and \(1\). Specifically,
    \[
    A_i = \frac{2}{1 + e^{-x}} - 1,
    \qquad
    \text{where } x = \log\left(\frac{0.975}{0.025}\right)\cdot \frac{\text{P-SP score}-0.85}{1-0.85}.
    \]
    The advantage term is normalized as $\hat{A}_i = \frac{A_i - \mathrm{mean}(A)}{\mathrm{std}(A)+10^{-6}}.$

    \item \textbf{Capacity-aware Token-level Reward.} This reward encourages each sampled token to align with the watermark-induced surrogate distribution while staying aware of the locally available semantics-preserving redistribution room. In implementation, to instantiate the surrogate human-like and watermarked distributions, we query the same reference model \(\pi_{\mathrm{ref}}\) with the rewriting template in Appendix~\ref{app:datatemplate}, following~\cite{huang2025rlcracker}. Specifically, the human-written text \(\mathbf{X}\) and the watermarked rewrite \(\mathbf{X}'_{wm}\) are respectively filled into the same template as the target text, yielding two conditioning contexts that share the same rewriting instruction. For each sampled token \(x'_{i,t}\), we then compute its log-probability under the human-conditioned surrogate distribution, \(\log p_h(x'_{i,t}) = \log \pi_{\mathrm{ref}}(x'_{i,t}\mid \mathbf{X}, x'_{i,<t})\), and under the watermarked-conditioned surrogate distribution, \(\log p_{wm}(x'_{i,t}) = \log \pi_{\mathrm{ref}}(x'_{i,t}\mid \mathbf{X}'_{wm}, x'_{i,<t})\). We further define the local capacity mass as
    \[
    c_{i,t} = 1 - \max_{v\in\mathcal{V}} \pi_{\mathrm{ref}}(v\mid \mathbf{X}, x'_{i,<t}),
    \]
    which measures the probability mass outside the dominant human-like continuation. Based on this quantity, the capacity-aware token-level reward is defined as
    \[
    r_{i,t} = c_{i,t}\Bigl(\log p_{wm}(x'_{i,t}) - \log p_h(x'_{i,t})\Bigr).
    \]
    Here, \(\log p_{wm}(x'_{i,t}) - \log p_h(x'_{i,t})\) measures whether the sampled token is more preferred by the watermark-conditioned surrogate than by the human-conditioned surrogate, while \(c_{i,t}\) downweights positions where the human-like distribution is already sharply concentrated. As a result, the reward is amplified only at positions with sufficient local flexibility for semantics-preserving redistribution.
\end{itemize}

\textbf{Cross-entropy anchor.} To stabilize optimization, we further introduce a \textbf{cross-entropy anchor}. Specifically, we condition the policy on the human-written input \(\mathbf{X}\) and teacher-force it to predict the watermarked rewrite \(\mathbf{X}'_{wm}\), yielding $\mathcal{L}_{\mathrm{CE}}=-\sum_t \log \pi_\theta(x^{wm}_t\mid \mathbf{X},x^{wm}_{<t}).$ This term anchors the policy toward the target watermarked rewriting distribution and prevents unstable distributional drift during RL training.

Putting these components together, the implemented objective can be written as
\begin{align*}
    \mathcal{J}(\theta) \approx\;
    \mathbb{E}_{\{x'_i\}\sim\pi_{\theta_{\mathrm{old}}}}
    \frac{1}{G} \sum_{i=1}^G \frac{1}{|x'_i|} \sum_{t=1}^{|x'_i|} &
    \Biggl[
      \frac{\pi_\theta(x'_{i,t}\mid \mathbf{X}, x'_{i,<t})}
           {\pi_{\theta_{\mathrm{old}}}(x'_{i,t}\mid \mathbf{X}, x'_{i,<t})}
      \bigl(w_1 \hat{A}_i + w_2 r_{i,t}\bigr) \\
    &\qquad
      - \beta D_{\mathrm{KL}}\bigl[\pi_\theta \,\|\, \pi_{\mathrm{ref}}\bigr]
    \Biggr]
    - w_3\,\mathcal{L}_{\mathrm{CE}}\bigl(\pi_\theta, \{(\mathbf{X}, \mathbf{X}'_{wm})\}\bigr), 
\end{align*}
In implementation, the KL regularizer is directly adopted from TRL-GRPO, and we set \(\beta=0.04\) following the OpenR1~\cite{openr1} setting. Unless otherwise specified, the reference model \(\pi_{\mathrm{ref}}\) is chosen as the initial policy model before RL training.

\textbf{Hyperparameters.}
For each watermarking scheme, we train a dedicated RLSpoofer model. The training set consists of 100 human--watermarked rewrite pairs, each standardized to 500 tokens. In addition, we use a validation set of 20 data pairs to select the hyperparameters \(w_1\), \(w_2\), and \(w_3\). We train the model for 10 epochs with a batch size of 48 and a group size of \(G=12\), using a cosine learning rate scheduler. Empirically, we find that a relatively large learning rate is important for inducing a sufficient distribution shift. Specifically, we use a learning rate of \(2\times 10^{-5}\) for PMark and \(2\times 10^{-4}\) and $1\times 10^{-4}$ for the other watermarking schemes. Training takes approximately 1.5 hours for Qwen3-4B and 45 minutes for Qwen3-0.6B on four NVIDIA Pro 6000 Blackwell GPUs.

We find that RLSpoofer is sensitive to the relative weights of its three optimization components, \(w_1\), \(w_2\), and \(w_3\), and different choices of these weights lead to substantial variation in performance. To study this effect, we train Qwen3-4B to spoof the EWD watermark and analyze, in Figure~\ref{fig:W_Sensitivity}, how the spoof rate and P-SP score change as each weight varies. Here, the spoof rate refers to the proportion of rewritten outputs detected as watermarked, without imposing any semantic similarity constraint. In the sensitivity analysis, we vary one weight at a time while fixing the other two: for \(w_1\), we fix \(w_2=2\) and \(w_3=1\); for \(w_2\), we fix \(w_1=3\) and \(w_3=1\); and for \(w_3\), we fix \(w_1=3\) and \(w_2=2\). Overall, we find that \((w_1,w_2,w_3)=(3,2,1)\) provides the best trade-off between spoofing strength and semantic fidelity for EWD. For completeness and reproducibility, we report the final weight settings for each watermarking scheme and attack model in Table~\ref{tab:hyperparamsRLS}.

\begin{figure}[ht]
    \centering
    \includegraphics[width=\linewidth]{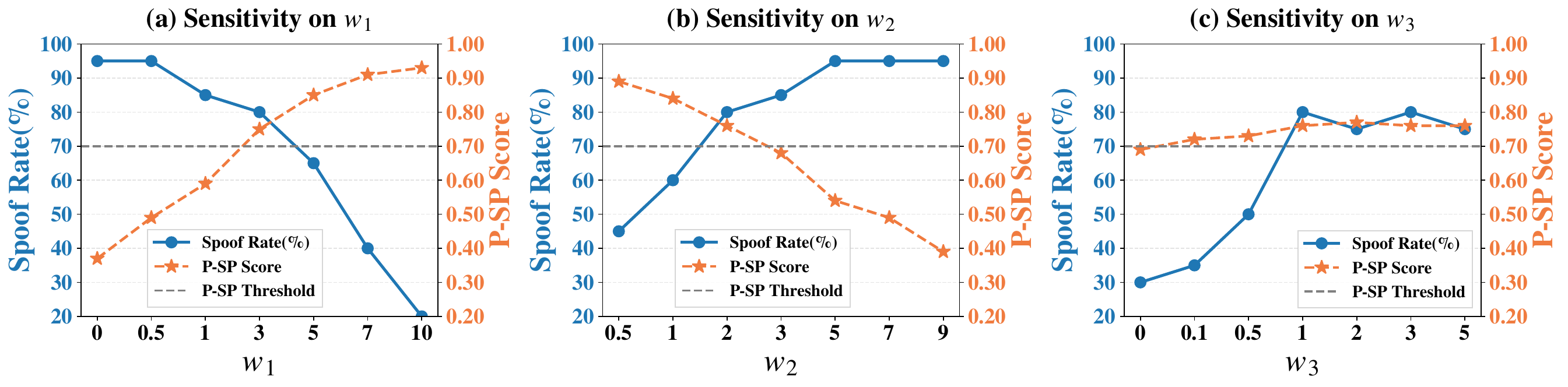}
    \caption{Sensitivity of RLSpoofer across component weights on validation set.}
    \label{fig:W_Sensitivity}
\end{figure}

\begin{table}[t]
\centering
\small
\setlength{\tabcolsep}{4pt}
\renewcommand{\arraystretch}{0.95}
\caption{Hyperparameter configurations of RLSpoofer across watermarking schemes.}
\label{tab:hyperparamsRLS}
\begin{tabular}{llcccccc}
\toprule
\textbf{Model} & \textbf{Hyper.} & \textbf{EWD} & \textbf{SWEET} & \textbf{KGW} & \textbf{Unigram} & \textbf{PF} & \textbf{PMark} \\
\midrule
\multirow{4}{*}{Qwen3-0.6B}
& $w_1$ & 5 & 3 & 5 & 3 & 1 & 1 \\
& $w_2$ & 10 & 2 & 1 & 5 & 5 & 2 \\
& $w_3$ & 1 & 1 & 1 & 1 & 2 & 1 \\
& lr    & $2\times10^{-4}$ & $2\times10^{-4}$ & $2\times10^{-4}$ & $2\times10^{-4}$ & $2\times10^{-4}$ & $2\times10^{-5}$ \\
\midrule
\multirow{4}{*}{Qwen3-1.7B}
& $w_1$ & 5 & 5 & 3 & 8 & 0.1 & 1 \\
& $w_2$ & 5 & 5 & 2 & 5 & 5 & 5 \\
& $w_3$ & 1 & 2 & 1 & 1 & 2 & 1 \\
& lr    & $2\times10^{-4}$ & $2\times10^{-4}$ & $2\times10^{-4}$ & $2\times10^{-4}$ & $2\times10^{-4}$ & $2\times10^{-5}$ \\
\midrule
\multirow{4}{*}{Qwen3-4B}
& $w_1$ & 3 & 3 & 5 & 8 & 1 & 1 \\
& $w_2$ & 2 & 1 & 1 & 5 & 10 & 5 \\
& $w_3$ & 1 & 1 & 1 & 1 & 3 & 1 \\
& lr    & $2\times10^{-4}$ & $2\times10^{-4}$ & $2\times10^{-4}$ & $2\times10^{-4}$ & $2\times10^{-4}$ & $2\times10^{-5}$ \\
\midrule
\multirow{4}{*}{Qwen2.5-3B-Instruct}
& $w_1$ & 3 & 3 & 5 & 5 & 3 & 3 \\
& $w_2$ & 2 & 2 & 2 & 2 & 5 & 10 \\
& $w_3$ & 1 & 1 & 2 & 1 & 2 & 2 \\
& lr    & $2\times10^{-4}$ & $2\times10^{-4}$ & $1\times10^{-4}$ & $2\times10^{-4}$ & $2\times10^{-4}$ & $2\times10^{-5}$ \\
\midrule
\multirow{4}{*}{Llama3.2-3B-Instruct}
& $w_1$ & 5 & 5 & 1 & 1 & 1 & 1 \\
& $w_2$ & 2 & 2 & 1 & 2 & 2 & 5 \\
& $w_3$ & 2 & 2 & 1 & 1 & 3 & 1 \\
& lr    & $2\times10^{-4}$ & $1\times10^{-4}$ & $1\times10^{-4}$ & $1\times10^{-4}$ & $1\times10^{-4}$ & $2\times10^{-5}$ \\
\bottomrule
\end{tabular}
\end{table}

\section{Experimental Results and Analysis}
\label{app:experimental_details}

\subsection{Robustness of watermarking schemes against spoofing}
\label{app:main_results}

We report the full spoofing results across six watermarking schemes, five attacker models, and four spoofing methods in Tables~\ref{tab:ewd_sweet_main}, \ref{tab:kgw_unigram_main}, and \ref{tab:pf_pmark_main}. We evaluate all methods using Spoof Success Rate (SSR), Spoof Rate (SR), P-SP, perplexity (PPL), and GPT-score (GPTS). We treat SSR as the primary metric, since it measures the proportion of rewritten outputs that are both detected as watermarked and satisfy the semantic constraint \( \text{P-SP}>0.7 \), whereas SR measures detector success without semantic filtering. We report the results averaged across 2 random seeds, 42 and 1234. For GPTS, we use GPT-5.4-nano as the judge model with the prompt template below, following~\cite{huang2025rlcracker}.

\begin{tcolorbox}[
    title=Prompt Template for GPT-5.4-nano as a Judge,
    colframe=gray,
    colback=gray!10,
    coltitle=black,
    fonttitle=\bfseries,
    rounded corners,
    enhanced,
    left=6pt, right=6pt, top=6pt, bottom=6pt,
    boxrule=0.8pt,
    arc=6pt,
    width=\linewidth
]
\detokenize{
    role: system,
    content: You are an impartial evaluator. You will be provided with an original text and a paraphrased version of that text. 
        Your task is to assess the quality of paraphrased text based on the following criteria: 
        1. The degree to which the paraphrased text maintains the meaning of the original text.
        2. The fluency, coherence, and clarity of the paraphrased text.
        3. The extent to which the style and tone of the paraphrased text match those of the original text.
        Please provide an objective evaluation and rate the paraphrased text on a scale of 1 to 10:
        - A rating of 1 indicates that the paraphrased text completely alters the meaning of the original text.
        - A rating of 10 indicates that the paraphrased text preserves the exact meaning of the original text, maintaining fluency, clarity, and style.
        Your response should return the score only.}
  
\detokenize{
    role: user,
    content: You will receive a original text and a pharased text. Please act as an impartial judge and evaluate the quality of the paraphrased text. You should evaluate the paraphrased text based on the following criteria:
    1. How closely the meaning of the paraphrased text aligns with the original text.
    2. The fluency and clarity of the paraphrased text.
    3. How closely the style and tone of the paraphrased text align with the original text.
    Rate the paraphrased text on a scale of 1 to 10, where:
    - A rating of 1 indicates that the paraphrased text deviates significantly from the original meaning.
    - A rating of 10 indicates that the paraphrased text perfectly preserves the original meaning, fluency, clarity, and tone.
  }
  
Your response should strictly follow this format and return the score only: [Rating]. 

Here  the original text: { [human-written text]} 

Here’s the paraphrased text:{ [rephrased text]}
\end{tcolorbox}

\newcommand{\stdinline}[2]{#1 {\scriptsize$\pm$ #2}}
\newcommand{\beststdinline}[2]{\textbf{#1} {\scriptsize$\pm$ #2}}

\begin{table}[ht]
  \centering
  \caption{Spoof Success Rate (SSR, \%), Spoof Rate (SR, \%), P-SP, Perplexity (PPL), and GPT-score (GPTS) across models. For readability, we report SSR and SR as mean $\pm$ standard deviation over two random seeds. \textbf{Bold} indicates the best mean SSR for each model under each watermarking scheme. Higher SSR indicates stronger spoofing performance.}
  \label{tab:ewd_sweet_main}
  \small
  \setlength{\tabcolsep}{2pt}
  \renewcommand{\arraystretch}{0.95}
  \begin{tabular}{@{}llcccccccccc@{}}
    \toprule
    & & \multicolumn{5}{c}{\textbf{EWD}} & \multicolumn{5}{c}{\textbf{SWEET}} \\
    \cmidrule(lr){3-7} \cmidrule(lr){8-12}
    \textbf{Models} & \textbf{Methods}
    & \textbf{SSR} & \textbf{SR} & \textbf{P-SP} & \textbf{PPL} & \textbf{GPTS}
    & \textbf{SSR} & \textbf{SR} & \textbf{P-SP} & \textbf{PPL} & \textbf{GPTS} \\
    \midrule

    \multirow{4}{*}{Qwen3-0.6B}
      & Distill
      & \stdinline{42.3}{1.2} & \stdinline{57.3}{2.5} & 0.75 & 4.50 & 6.62
      & \stdinline{20.0}{0.8} & \stdinline{33.8}{1.2} & 0.76 & 4.68 & 6.62 \\
      & DITTO
      & \stdinline{7.50}{0.5} & \stdinline{97.3}{0.5} & 0.43 & 4.78 & 3.00
      & \stdinline{6.75}{0.3} & \stdinline{95.8}{0.6} & 0.41 & 4.52 & 3.00 \\
      & DPO
      & \stdinline{0.25}{0.8} & \stdinline{0.75}{0.0} & 0.57 & 1.95 & 4.42
      & \stdinline{0.00}{0.0} & \stdinline{0.00}{0.0} & 0.76 & 2.71 & 4.42 \\
      & RLSpoofer
      & \beststdinline{54.3}{0.8} & \stdinline{80.5}{1.8} & 0.73 & 5.36 & 5.77
      & \beststdinline{50.5}{1.2} & \stdinline{66.3}{2.0} & 0.79 & 4.93 & 5.77 \\
    \midrule

    \multirow{4}{*}{Qwen3-1.7B}
      & Distill
      & \stdinline{43.8}{1.5} & \stdinline{53.8}{2.2} & 0.80 & 4.59 & 7.27
      & \stdinline{26.8}{2.5} & \stdinline{37.8}{1.4} & 0.79 & 4.63 & 7.27 \\
      & DITTO
      & \stdinline{21.5}{0.8} & \stdinline{96.0}{0.2} & 0.53 & 5.26 & 4.68
      & \stdinline{29.0}{0.7} & \stdinline{92.8}{0.8} & 0.56 & 5.22 & 4.68 \\
      & DPO
      & \stdinline{0.25}{0.0} & \stdinline{0.25}{0.0} & 0.94 & 2.35 & 8.17
      & \stdinline{0.00}{0.0} & \stdinline{0.00}{0.0} & 0.84 & 2.46 & 8.17 \\
      & RLSpoofer
      & \beststdinline{53.5}{1.2} & \stdinline{68.5}{0.7} & 0.76 & 5.15 & 6.50
      & \beststdinline{52.0}{0.8} & \stdinline{72.0}{0.6} & 0.71 & 5.92 & 6.45 \\
    \midrule

    \multirow{4}{*}{Qwen3-4B}
      & Distill
      & \stdinline{51.3}{0.6} & \stdinline{62.5}{1.5} & 0.81 & 4.75 & 7.66
      & \stdinline{37.3}{0.8} & \stdinline{43.8}{2.5} & 0.82 & 4.79 & 7.66 \\
      & DITTO
      & \stdinline{56.0}{0.3} & \stdinline{94.3}{0.2} & 0.68 & 5.62 & 6.66
      & \stdinline{43.3}{0.6} & \stdinline{94.5}{0.3} & 0.66 & 5.25 & 6.66 \\
      & DPO
      & \stdinline{0.25}{0.0} & \stdinline{0.50}{0.0} & 0.78 & 2.24 & 7.55
      & \stdinline{0.00}{0.0} & \stdinline{0.00}{0.0} & 0.78 & 2.02 & 7.55 \\
      & RLSpoofer
      & \beststdinline{56.5}{0.6} & \stdinline{87.0}{0.4} & 0.73 & 4.88 & 6.68
      & \beststdinline{52.3}{0.2} & \stdinline{71.3}{0.3} & 0.75 & 5.38 & 6.78 \\
    \midrule

    \multirow{4}{*}{\makecell[l]{Qwen2.5-3B\\-Instruct}}
      & Distill
      & \beststdinline{55.5}{0.5} & \stdinline{81.5}{1.2} & 0.76 & 4.40 & 7.31
      & \stdinline{49.8}{0.3} & \stdinline{67.5}{1.0} & 0.77 & 4.37 & 7.31 \\
      & DITTO
      & \stdinline{14.0}{0.0} & \stdinline{99.5}{0.0} & 0.50 & 4.95 & 5.32
      & \stdinline{22.3}{0.2} & \stdinline{99.8}{0.0} & 0.54 & 4.78 & 5.32 \\
      & DPO
      & \stdinline{0.00}{0.0} & \stdinline{0.75}{0.0} & 0.88 & 2.00 & 7.76
      & \stdinline{0.00}{0.0} & \stdinline{0.50}{0.0} & 0.87 & 1.94 & 7.76 \\
      & RLSpoofer
      & \stdinline{53.5}{0.6} & \stdinline{83.0}{0.8} & 0.70 & 4.66 & 6.80
      & \beststdinline{54.5}{1.2} & \stdinline{68.8}{1.3} & 0.75 & 6.64 & 6.57 \\
    \midrule

    \multirow{4}{*}{\makecell[l]{Llama3.2-3B\\-Instruct}}
      & Distill
      & \stdinline{53.8}{0.2} & \stdinline{71.3}{1.3} & 0.77 & 4.49 & 7.22
      & \stdinline{45.5}{0.5} & \stdinline{62.0}{2.5} & 0.76 & 4.37 & 7.22 \\
      & DITTO
      & \stdinline{19.3}{0.3} & \stdinline{99.5}{0.0} & 0.54 & 4.90 & 5.13
      & \stdinline{24.3}{0.2} & \stdinline{100.}{0.0} & 0.56 & 4.76 & 5.13 \\
      & DPO
      & \stdinline{2.50}{0.0} & \stdinline{1.25}{0.0} & 0.49 & 2.00 & 2.62
      & \stdinline{0.50}{0.0} & \stdinline{3.00}{0.0} & 0.53 & 1.86 & 2.62 \\
      & RLSpoofer
      & \beststdinline{54.5}{0.8} & \stdinline{79.5}{0.3} & 0.70 & 4.99 & 6.79
      & \beststdinline{54.5}{1.0} & \stdinline{79.5}{2.2} & 0.74 & 6.15 & 6.43 \\
    \bottomrule
  \end{tabular}
\end{table}

\begin{table}[t]
  \centering
  \caption{Spoof Success Rate (SSR, \%), Spoof Rate (SR, \%), P-SP, Perplexity (PPL), and GPT-score (GPTS) across models. For readability, we report SSR and SR as mean $\pm$ standard deviation over two random seeds. \textbf{Bold} indicates the best mean SSR for each model under each watermarking scheme. Higher SSR indicates stronger spoofing performance.}
  \label{tab:kgw_unigram_main}
  \small
  \setlength{\tabcolsep}{2pt}
  \renewcommand{\arraystretch}{0.95}
  \begin{tabular}{@{}llcccccccccc@{}}
    \toprule
    & & \multicolumn{5}{c}{\textbf{KGW}} & \multicolumn{5}{c}{\textbf{Unigram}} \\
    \cmidrule(lr){3-7} \cmidrule(lr){8-12}
    \textbf{Models} & \textbf{Methods}
    & \textbf{SSR} & \textbf{SR} & \textbf{P-SP} & \textbf{PPL} & \textbf{GPTS}
    & \textbf{SSR} & \textbf{SR} & \textbf{P-SP} & \textbf{PPL} & \textbf{GPTS} \\
    \midrule

    \multirow{4}{*}{Qwen3-0.6B}
      & Distill
      & \stdinline{35.8}{0.5} & \stdinline{66.8}{3.5} & 0.68 & 4.78 & 5.91
      & \stdinline{13.8}{0.2} & \stdinline{27.5}{1.5} & 0.84 & 2.62 & 5.91 \\
      & DITTO
      & \stdinline{1.00}{0.0} & \stdinline{84.0}{0.8} & 0.33 & 5.05 & 2.34
      & \stdinline{0.25}{0.0} & \stdinline{99.3}{0.0} & 0.32 & 1.67 & 2.34 \\
      & DPO
      & \stdinline{1.00}{0.0} & \stdinline{11.0}{0.6} & 0.66 & 2.12 & 5.31
      & \stdinline{0.25}{0.0} & \stdinline{13.8}{1.2} & 0.32 & 1.47 & 5.31 \\
      & RLSpoofer
      & \beststdinline{52.0}{0.6} & \stdinline{82.5}{0.8} & 0.72 & 5.16 & 5.32
      & \beststdinline{49.5}{0.8} & \stdinline{82.3}{1.0} & 0.70 & 5.16 & 5.43 \\
    \midrule

    \multirow{4}{*}{Qwen3-1.7B}
      & Distill
      & \stdinline{44.5}{0.5} & \stdinline{67.5}{2.5} & 0.75 & 5.11 & 7.24
      & \stdinline{19.5}{0.8} & \stdinline{38.0}{0.6} & 0.81 & 2.50 & 7.24 \\
      & DITTO
      & \stdinline{13.8}{0.2} & \stdinline{95.5}{0.0} & 0.52 & 6.36 & 4.43
      & \stdinline{1.50}{0.0} & \stdinline{100.}{0.0} & 0.36 & 1.76 & 4.43 \\
      & DPO
      & \stdinline{1.00}{0.0} & \stdinline{1.00}{0.0} & 0.96 & 2.55 & 9.12
      & \stdinline{0.50}{0.0} & \stdinline{2.50}{0.0} & 0.87 & 2.03 & 9.12 \\
      & RLSpoofer
      & \beststdinline{52.0}{1.2} & \stdinline{78.8}{2.2} & 0.71 & 5.62 & 6.24
      & \beststdinline{54.8}{0.8} & \stdinline{83.8}{1.3} & 0.73 & 4.72 & 6.53 \\
    \midrule

    \multirow{4}{*}{Qwen3-4B}
      & Distill
      & \stdinline{57.0}{0.5} & \stdinline{74.3}{1.5} & 0.79 & 5.23 & 7.67
      & \stdinline{28.0}{0.2} & \stdinline{48.3}{3.8} & 0.80 & 2.74 & 7.67 \\
      & DITTO
      & \stdinline{36.5}{0.8} & \stdinline{98.3}{0.0} & 0.61 & 6.64 & 6.01
      & \stdinline{2.25}{0.0} & \stdinline{100.}{0.0} & 0.39 & 2.01 & 6.01 \\
      & DPO
      & \stdinline{1.00}{0.0} & \stdinline{1.50}{0.0} & 0.93 & 2.43 & 9.30
      & \stdinline{0.75}{0.0} & \stdinline{13.8}{0.7} & 0.59 & 1.67 & 9.30 \\
      & RLSpoofer
      & \beststdinline{58.0}{0.7} & \stdinline{79.5}{2.5} & 0.75 & 5.80 & 7.25
      & \beststdinline{54.8}{0.5} & \stdinline{88.5}{1.7} & 0.74 & 4.12 & 6.66 \\
    \midrule

    \multirow{4}{*}{\makecell[l]{Qwen2.5-3B\\-Instruct}}
      & Distill
      & \beststdinline{60.3}{0.3} & \stdinline{88.8}{0.2} & 0.74 & 4.89 & 7.43
      & \stdinline{25.8}{1.0} & \stdinline{42.5}{2.5} & 0.80 & 2.60 & 7.43 \\
      & DITTO
      & \stdinline{9.50}{0.5} & \stdinline{99.8}{0.0} & 0.48 & 6.16 & 4.67
      & \stdinline{0.25}{0.3} & \stdinline{100.}{0.0} & 0.34 & 2.09 & 4.67 \\
      & DPO
      & \stdinline{1.25}{0.0} & \stdinline{6.25}{0.0} & 0.87 & 2.16 & 7.15
      & \stdinline{2.50}{0.0} & \stdinline{2.50}{0.0} & 0.78 & 1.87 & 7.15 \\
      & RLSpoofer
      & \stdinline{57.3}{0.3} & \stdinline{80.5}{0.7} & 0.72 & 5.81 & 6.73
      & \beststdinline{54.5}{0.5} & \stdinline{76.5}{1.3} & 0.77 & 4.55 & 6.64 \\
    \midrule

    \multirow{4}{*}{\makecell[l]{Llama3.2-3B\\-Instruct}}
      & Distill
      & \beststdinline{56.3}{0.8} & \stdinline{84.0}{0.5} & 0.75 & 4.86 & 7.58
      & \stdinline{26.0}{0.3} & \stdinline{48.5}{1.3} & 0.77 & 2.55 & 7.58 \\
      & DITTO
      & \stdinline{14.0}{0.5} & \stdinline{99.3}{0.0} & 0.51 & 5.72 & 4.72
      & \stdinline{1.00}{0.0} & \stdinline{99.5}{0.0} & 0.35 & 1.81 & 4.72 \\
      & DPO
      & \stdinline{0.75}{0.0} & \stdinline{16.0}{0.5} & 0.36 & 1.84 & 1.58
      & \stdinline{7.75}{0.3} & \stdinline{34.3}{0.8} & 0.60 & 1.52 & 1.58 \\
      & RLSpoofer
      & \stdinline{55.3}{1.5} & \stdinline{80.5}{2.5} & 0.76 & 6.12 & 6.53
      & \beststdinline{52.0}{1.2} & \stdinline{75.5}{3.0} & 0.72 & 4.60 & 6.53 \\
    \bottomrule
  \end{tabular}
\end{table}

\begin{table}[t]
  \centering
  \caption{Spoof Success Rate (SSR, \%), Spoof Rate (SR, \%), P-SP, Perplexity (PPL), and GPT-score (GPTS) across models. For readability, we report SSR and SR as mean $\pm$ standard deviation over two random seeds. \textbf{Bold} indicates the best mean SSR for each model under each watermarking scheme. Higher SSR indicates stronger spoofing performance.}
  \label{tab:pf_pmark_main}
  \small
  \setlength{\tabcolsep}{2pt}
  \renewcommand{\arraystretch}{0.95}
  \begin{tabular}{@{}llcccccccccc@{}}
    \toprule
    & & \multicolumn{5}{c}{\textbf{PF}} & \multicolumn{5}{c}{\textbf{PMark}} \\
    \cmidrule(lr){3-7} \cmidrule(lr){8-12}
    \textbf{Models} & \textbf{Methods}
    & \textbf{SSR} & \textbf{SR} & \textbf{P-SP} & \textbf{PPL} & \textbf{GPTS}
    & \textbf{SSR} & \textbf{SR} & \textbf{P-SP} & \textbf{PPL} & \textbf{GPTS} \\
    \midrule

    \multirow{4}{*}{Qwen3-0.6B}
      & Distill
      & \stdinline{6.50}{0.2} & \stdinline{7.25}{0.3} & 0.97 & 2.06 & 5.91
      & \stdinline{20.0}{0.5} & \stdinline{23.3}{0.8} & 0.90 & 2.31 & 5.91 \\
      & DITTO
      & \stdinline{5.50}{0.2} & \stdinline{11.5}{0.3} & 0.79 & 2.21 & 2.34
      & \stdinline{11.8}{0.3} & \stdinline{25.3}{0.5} & 0.63 & 2.38 & 2.34 \\
      & DPO
      & \stdinline{2.50}{0.2} & \stdinline{22.5}{0.4} & 0.57 & 1.56 & 5.31
      & \stdinline{6.25}{0.2} & \stdinline{14.5}{0.5} & 0.63 & 2.99 & 5.31 \\
      & RLSpoofer
      & \beststdinline{33.3}{0.3} & \stdinline{52.0}{0.2} & 0.66 & 2.19 & 5.28
      & \beststdinline{29.5}{0.4} & \stdinline{31.5}{0.3} & 0.92 & 2.03 & 5.80 \\
    \midrule

    \multirow{4}{*}{Qwen3-1.7B}
      & Distill
      & \stdinline{7.00}{0.3} & \stdinline{8.00}{0.3} & 0.96 & 2.06 & 7.24
      & \stdinline{20.3}{0.2} & \stdinline{22.5}{0.2} & 0.90 & 2.34 & 7.24 \\
      & DITTO
      & \stdinline{7.50}{0.4} & \stdinline{9.50}{0.2} & 0.86 & 2.27 & 4.43
      & \stdinline{16.5}{0.5} & \stdinline{30.5}{0.6} & 0.68 & 2.48 & 4.43 \\
      & DPO
      & \stdinline{4.25}{0.2} & \stdinline{17.0}{0.4} & 0.58 & 2.31 & 9.12
      & \stdinline{22.5}{0.2} & \stdinline{24.0}{0.3} & 0.93 & 2.32 & 9.12 \\
      & RLSpoofer
      & \beststdinline{29.0}{0.3} & \stdinline{38.5}{0.3} & 0.73 & 4.10 & 6.21
      & \beststdinline{29.5}{0.5} & \stdinline{31.3}{0.4} & 0.90 & 2.29 & 6.93 \\
    \midrule

    \multirow{4}{*}{Qwen3-4B}
      & Distill
      & \stdinline{6.00}{0.2} & \stdinline{6.50}{0.2} & 0.96 & 2.11 & 7.67
      & \stdinline{21.5}{0.3} & \stdinline{27.3}{0.5} & 0.92 & 2.39 & 7.67 \\
      & DITTO
      & \stdinline{3.50}{0.4} & \stdinline{6.25}{0.2} & 0.88 & 2.47 & 6.01
      & \stdinline{16.5}{0.2} & \stdinline{23.5}{0.2} & 0.71 & 2.50 & 6.01 \\
      & DPO
      & \stdinline{5.25}{0.2} & \stdinline{8.75}{0.3} & 0.88 & 2.33 & 9.30
      & \stdinline{17.5}{0.4} & \stdinline{30.8}{0.5} & 0.68 & 5.06 & 9.30 \\
      & RLSpoofer
      & \beststdinline{62.0}{0.7} & \stdinline{82.8}{0.5} & 0.77 & 2.58 & 6.60
      & \beststdinline{36.3}{0.3} & \stdinline{40.8}{0.4} & 0.91 & 2.05 & 7.25 \\
    \midrule

    \multirow{4}{*}{\makecell[l]{Qwen2.5-3B\\-Instruct}}
      & Distill
      & \stdinline{6.50}{0.2} & \stdinline{8.75}{0.4} & 0.93 & 2.04 & 7.43
      & \stdinline{22.3}{0.4} & \stdinline{26.0}{0.4} & 0.91 & 2.01 & 7.43 \\
      & DITTO
      & \stdinline{5.25}{0.2} & \stdinline{12.3}{0.2} & 0.72 & 2.13 & 4.67
      & \stdinline{11.3}{0.2} & \stdinline{29.3}{0.5} & 0.56 & 2.00 & 4.67 \\
      & DPO
      & \stdinline{6.25}{0.2} & \stdinline{12.3}{0.4} & 0.83 & 1.98 & 7.15
      & \stdinline{24.3}{0.3} & \stdinline{30.3}{0.3} & 0.95 & 2.22 & 7.15 \\
      & RLSpoofer
      & \beststdinline{50.3}{0.3} & \stdinline{79.5}{0.3} & 0.68 & 1.79 & 6.61
      & \beststdinline{30.3}{0.2} & \stdinline{33.3}{0.3} & 0.89 & 1.94 & 7.41 \\
    \midrule

    \multirow{4}{*}{\makecell[l]{Llama3.2-3B\\-Instruct}}
      & Distill
      & \stdinline{8.75}{0.4} & \stdinline{9.75}{0.2} & 0.93 & 4.49 & 7.58
      & \stdinline{23.3}{0.6} & \stdinline{27.8}{0.5} & 0.89 & 2.22 & 7.58 \\
      & DITTO
      & \stdinline{6.50}{0.2} & \stdinline{11.3}{0.2} & 0.79 & 4.90 & 4.72
      & \stdinline{18.0}{0.4} & \stdinline{34.2}{0.5} & 0.65 & 2.03 & 4.72 \\
      & DPO
      & \stdinline{6.25}{0.2} & \stdinline{11.3}{0.4} & 0.67 & 2.00 & 1.58
      & \stdinline{25.0}{0.3} & \stdinline{30.3}{0.4} & 0.87 & 1.94 & 1.58 \\
      & RLSpoofer
      & \beststdinline{49.8}{0.4} & \stdinline{60.5}{0.4} & 0.85 & 8.99 & 6.53
      & \beststdinline{33.3}{0.2} & \stdinline{34.3}{0.3} & 0.92 & 2.23 & 7.01 \\
    \bottomrule
  \end{tabular}
\end{table}

As shown in Tables~\ref{tab:ewd_sweet_main}, \ref{tab:kgw_unigram_main} and \ref{tab:pf_pmark_main}, we observe that RLSpoofer achieves the best or second-best SSR in most settings while maintaining competitive rephrasing quality. In contrast, DITTO often attains high SR but much lower P-SP, suggesting that many of its successful detections come from semantically distorted rewrites. DPO shows the opposite pattern: it typically preserves semantics better, but fails to induce sufficient distribution shift toward the target watermark, resulting in consistently weak SSR. Distill is competitive on several logit-based schemes, but is substantially less effective on the more challenging sampling-based settings.

For the logit-based watermarks, including EWD, SWEET, KGW, and Unigram, we find that these schemes are already vulnerable to distribution-matching baselines, although RLSpoofer remains the most consistent method overall. In particular, RLSpoofer achieves SSR above 50\% on EWD and SWEET for nearly all attacker models, indicating that it can preserve semantics while aligning with watermark-favored distributions. On KGW, Distill is competitive in some cases, suggesting that its watermark signal is relatively learnable from large corpora. On Unigram, however, the gap becomes much larger: while Distill and DITTO degrade substantially, RLSpoofer consistently maintains strong spoofing performance.

The largest gap appears on the sampling-based watermarks PF and PMark. We observe that all three baselines struggle on PF, with uniformly low SSR across models, indicating that they largely fail to reproduce the watermark signal. By contrast, RLSpoofer achieves substantially stronger spoofing performance, most notably reaching 62\% SSR on PF with Qwen3-4B. This result is especially notable because PF is distortion-free in expectation, making its signal difficult to capture through standard distillation-based attacks. On PMark, RLSpoofer again achieves the best SSR across all attacker models while maintaining high P-SP, showing that its gain does not come merely from sacrificing rephrasing quality.

In conclusion, we observe that logit-based watermarks are vulnerable to attacks that learn their induced distributional bias, but RLSpoofer is more reliable because it better preserves semantics while shifting the output distribution. Moreover, sampling-based watermarks, especially PF, appear much more robust to existing baselines, yet remain highly vulnerable to RLSpoofer, suggesting that standard evaluation pipelines can substantially underestimate spoofing risk.

Additionally, we present example visualizations for the logit-based watermarking algorithms EWD, SWEET, KGW, and Unigram, comparing the watermarked rewrites produced by RLSpoofer with the corresponding original unwatermarked texts. The results are shown in Figure~\ref{fig:wm_ewd} to Figure~\ref{fig:uwm_Unigram}, with paired figures showing high semantic similarities (P-SP score>0.9).

\begin{figure}
    \centering
    \includegraphics[width=\linewidth]{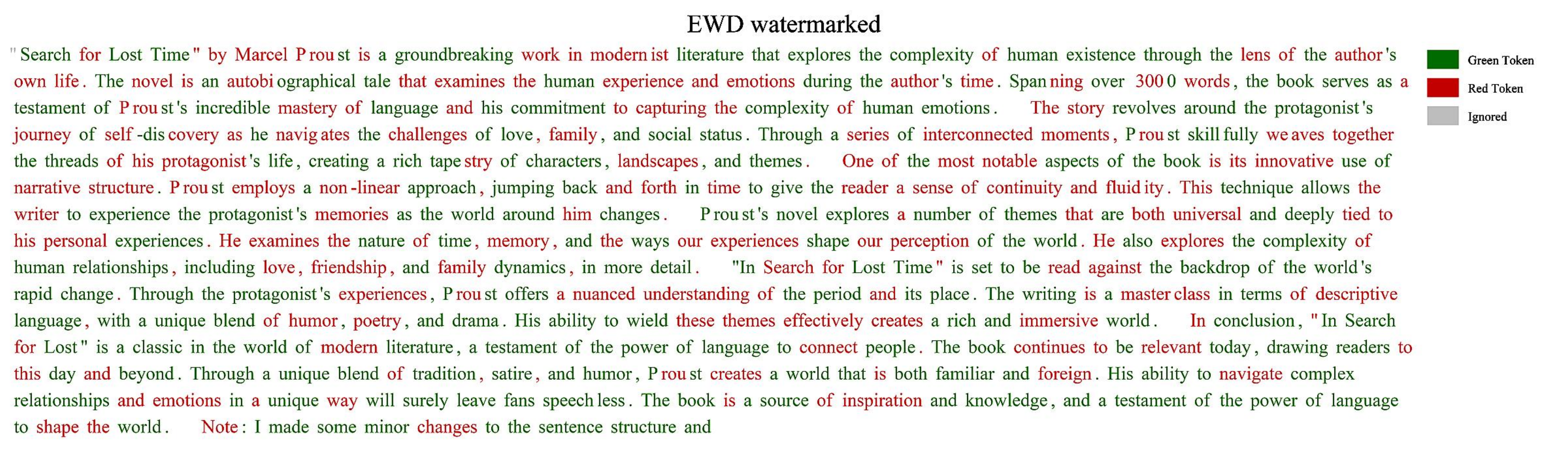}
    \caption{Watermarked EWD example}
    \label{fig:wm_ewd}
\end{figure}
\begin{figure}
    \centering
    \includegraphics[width=\linewidth]{imgs/EWD_watermarked.pdf}
    \caption{Unwatermarked EWD example}
    \label{fig:uwm_ewd}
\end{figure}
\begin{figure}
    \centering
    \includegraphics[width=\linewidth]{imgs/EWD_watermarked.pdf}
    \caption{Watermarked EWD example}
    \label{fig:wm_SWEET}
\end{figure}
\begin{figure}
    \centering
    \includegraphics[width=\linewidth]{imgs/EWD_watermarked.pdf}
    \caption{Unwatermarked EWD example}
    \label{fig:uwm_SWEET}
\end{figure}
\begin{figure}
    \centering
    \includegraphics[width=\linewidth]{imgs/EWD_watermarked.pdf}
    \caption{Watermarked EWD example}
    \label{fig:wm_KGW}
\end{figure}
\begin{figure}
    \centering
    \includegraphics[width=\linewidth]{imgs/EWD_watermarked.pdf}
    \caption{Unwatermarked EWD example}
    \label{fig:uwm_KGW}
\end{figure}
\begin{figure}
    \centering
    \includegraphics[width=\linewidth]{imgs/EWD_watermarked.pdf}
    \caption{Watermarked EWD example}
    \label{fig:wm_Unigram}
\end{figure}
\begin{figure}
    \centering
    \includegraphics[width=\linewidth]{imgs/EWD_watermarked.pdf}
    \caption{Unwatermarked EWD example}
    \label{fig:uwm_Unigram}
\end{figure}

\subsection{Detection Score Distribution}
\label{app:detection_scores}
The detection scores in Figure~\ref{fig:detection_pos}(a) demonstrate that RLSpoofer effectively shifts the distribution of rephrased texts toward the watermarked distribution and away from the unwatermarked reference. To establish the unwatermarked distribution, we directly use the test set's z-scores; for the watermarked distribution, we use test-set rewrites generated by the watermarked Llama-3.1-8B-Instruct. Finally, the rephrased distribution consists of detection scores from outputs paraphrased by our RLSpoofer-trained Qwen3-4B. The clear separation between the RLSpoofer-rephrased and human-like distributions confirms the attack's effectiveness at stealing watermarks, exposing a fundamental vulnerability in current schemes.

\subsection{Effect of Local capacity mass}
\label{app:local_capacity_mass}

To more directly illustrate the effectiveness of local capacity mass in identifying feasible room for watermark injection during training, we conduct an analysis using the first checkpoint obtained by training Qwen3-4B on the 100-sample SWEET watermark training set. Specifically, we first use this checkpoint to rewrite all 100 training samples and collect the resulting paraphrased outputs. Based on these outputs, we compute the raw token-level reward signal, namely the unweighted log-likelihood ratio \(\log \frac{P_{wm}(x_t' \mid h_t)}{P_h(x_t' \mid h_t)}\), together with the corresponding local capacity mass under the human-like surrogate distribution \(P_h\), which is approximated by the original Qwen3-4B reference model. We then compute the final reward used in RL training, \(r_t\), and average the raw reward, the capacity mass, and the final reward over all token occurrences across the 100 samples. After grouping tokens by part-of-speech tags, we obtain the statistics shown in Figure~\ref{fig:detection_pos}(b).

As shown in Figure~\ref{fig:detection_pos}(b), the induced reward is small on numerals, which are typically critical for preserving semantic fidelity, but larger on adjectives, verbs, and adverbs, which admit greater lexical flexibility. This suggests that local capacity mass suppresses overly sharp rewards on semantically rigid tokens while amplifying rewards on more substitutable ones. Therefore, rather than uniformly favoring all watermark-preferred tokens, RLSpoofer concentrates watermark injection on positions with genuinely feasible semantic slack.

Additionally, we verify that this weighting is not only intuitively aligned with semantic flexibility, but also critical to spoofing performance. Specifically, we replace the original weight \(1-p_{\max}\) with either a uniform weight of \(1\) or the reverse weight \(p_{\max}\). We use the same hyperparameter settings as in Table~\ref{tab:hyperparamsRLS} and report the detailed results in Table~\ref{tab:reward_weight_ablation_app}. As discussed in the main text, both replacements consistently degrade spoofing performance, indicating that effective watermark spoofing requires identifying positions with sufficient semantics-preserving redistribution room, rather than uniformly enlarging token-level rewards.
\begin{table}[ht]
\centering
\small
\setlength{\tabcolsep}{6pt}
\renewcommand{\arraystretch}{0.95}
\caption{Results across different reward weighting.}
\begin{tabular}{llcccccc}
\toprule
& & \multicolumn{3}{c}{\textbf{EWD}} & \multicolumn{3}{c}{\textbf{SWEET}} \\
\cmidrule(lr){3-5} \cmidrule(lr){6-8}
\textbf{Model} & \textbf{Weight} & \textbf{SSR} & \textbf{SR} & \textbf{P-SP} & \textbf{SSR} & \textbf{SR} & \textbf{P-SP} \\
\midrule
\multirow{3}{*}{Qwen3-0.6B}
& $1-p_{\max}$ & 54.3 & 80.5 & 0.73 & 50.5 & 66.3 & 0.79 \\
& 1            & 42.8 & 67.5 & 0.71 & 39.0 & 68.8 & 0.71 \\
& $p_{\max}$   & 40.5 & 73.0 & 0.67 & 29.8 & 41.0 & 0.75 \\
\midrule
\multirow{3}{*}{Qwen3-4B}
& $1-p_{\max}$ & 56.5 & 87.0 & 0.73 & 52.3 & 71.3 & 0.75 \\
& 1            & 35.5 & 64.3 & 0.68 & 28.8 & 38.3 & 0.78 \\
& $p_{\max}$   & 28.0 & 47.3 & 0.70 & 25.0 & 43.0 & 0.68 \\
\bottomrule
\end{tabular}
\label{tab:reward_weight_ablation_app}
\end{table}

\subsection{Effectiveness of conservative semantic rewards}
\label{app:semantic_rewards}

We further study how the design of the sequence-level semantic reward affects spoofing performance. Specifically, we compare four variants: {Min.}, {Avg.}, {Hum.}, and {W.M.}, where {Min.} denotes the minimum of the semantic similarities to the human-written text and the watermarked rewrite, {Avg.} denotes their average, and {Hum.} and {W.M.} use only the human-written text or the watermarked rewrite as the semantic reference, respectively. We use the same training hyperparameter settings reported in Table~\ref{tab:hyperparamsRLS}, the results are shown in Table~\ref{tab:semantic_reward_ablation}.

We observe that {Min.} consistently achieves the best SSR across both attacker models and all three watermarking schemes. For Qwen3-0.6B, {Min.} yields the highest SSR and SR on EWD, SWEET, and PF, outperforming the other reward variants by a clear margin. A similar trend holds for Qwen3-4B, where {Min.} again gives the best SSR and SSR, most notably improving PF from 50.3\% and 51\% under {Avg.} and {W.M.} to 62\%. These results suggest that a conservative semantic reward provides a better optimization target for watermark spoofing. We attribute this improvement to the fact that successful spoofing must simultaneously preserve the meaning of the original text and remain close to the watermarked rewrite, which serves as the surrogate target distribution. Using only one reference may encourage one-sided alignment, while averaging the two scores can still mask weak alignment to one side. In contrast, {Min.} explicitly enforces both constraints and therefore yields more reliable semantic control during training.

\begin{table}[ht]
\centering
\small
\setlength{\tabcolsep}{5pt}
\renewcommand{\arraystretch}{0.95}
\caption{Ablation on different semantic reward designs.}
\begin{tabular}{llccccccccc}
\toprule
& & \multicolumn{3}{c}{\textbf{EWD}} & \multicolumn{3}{c}{\textbf{SWEET}} & \multicolumn{3}{c}{\textbf{PF}} \\
\cmidrule(lr){3-5} \cmidrule(lr){6-8} \cmidrule(lr){9-11}
\textbf{Model} & \textbf{Sem.R.} & \textbf{SSR} & \textbf{SR} & \textbf{P-SP} & \textbf{SSR} & \textbf{SR} & \textbf{P-SP} & \textbf{SSR} & \textbf{SR} & \textbf{P-SP} \\
\midrule
\multirow{4}{*}{Qwen3-0.6B}
& Min. & 54.3 & 80.5 & 0.73 & 50.5 & 66.3 & 0.79 & 33.3 & 52.0 & 0.66 \\
& Avg. & 44.8 & 78.3 & 0.70 & 48.8 & 60.5 & 0.77 & 26.3 & 42.3 & 0.72 \\
& Hum. & 42.3 & 69.0 & 0.70 & 43.0 & 59.5 & 0.75 & 24.5 & 40.0 & 0.76 \\
& W.M. & 48.5 & 71.3 & 0.70 & 46.8 & 61.3 & 0.76 & 25.5 & 51.3 & 0.64 \\
\midrule
\multirow{4}{*}{Qwen3-4B}
& Min. & 56.5 & 87.0 & 0.73 & 52.3 & 71.3 & 0.75 & 62.0 & 82.8 & 0.77 \\
& Avg. & 47.3 & 78.0 & 0.70 & 34.3 & 49.0 & 0.76 & 50.3 & 76.3 & 0.78 \\
& Hum. & 47.5 & 80.0 & 0.68 & 47.0 & 59.3 & 0.76 & 48.5 & 69.3 & 0.76 \\
& W.M. & 45.5 & 62.3 & 0.75 & 48.5 & 61.8 & 0.78 & 51.0 & 70.8 & 0.75 \\
\bottomrule
\end{tabular}
\label{tab:semantic_reward_ablation}
\end{table}

\subsection{Guidance of cross-entropy anchor}
\label{app:ceanchor}

We examine the role of the cross-entropy (CE) anchor in stabilizing RLSpoofer training. We compare the RLSpoofer method ({With Anchor}) with two alternatives: removing the CE anchor ({Without Anchor}) and replacing RL training with supervised distillation on the same 100 training pairs ({Distillation (100)}). We adopt the training hyperparameters reported in Appendix~\ref{app:experimental_setup}. Specifically, for \textit{Without Anchor} setting, we set the learning rate to $1\times10^{-4}$ to stabilize the training process. The results are reported in Table~\ref{tab:ce_anchor_ablation}.

We find that the CE anchor is critical for stable and effective spoofing. Removing it leads to substantial drops in SSR across all settings. For example, on Qwen3-4B, SSR decreases from 56.5\% to 33.5\% on EWD, and we observe similar degradations across all other settings. This shows that, even with carefully designed token-level and semantic rewards, unconstrained policy optimization can drift away from the desired watermarked rewriting distribution.  Moreover, supervised distillation on the same 100 training pairs performs extremely poorly, with nearly zero SSR in all settings despite relatively high P-SP. This indicates that the gain of RLSpoofer does not come merely from imitation on limited data, but from the combination of RL-based distributional alignment and CE-based stabilization. Overall, these results confirm that the CE anchor is essential for maintaining a useful optimization trajectory while still allowing the policy to shift toward watermark-favored generations.

\begin{table}[ht]
\centering
\small
\setlength{\tabcolsep}{6pt}
\caption{Ablation on the effect of the cross-entropy anchor.}
\begin{tabular}{llccccccccc}
\toprule
& & \multicolumn{3}{c}{\textbf{With Anchor}} & \multicolumn{3}{c}{\textbf{Without Anchor}} & \multicolumn{3}{c}{\textbf{Distillation (100)}} \\
\cmidrule(lr){3-5} \cmidrule(lr){6-8} \cmidrule(lr){9-11}
\textbf{Model} & \textbf{Scheme} & \textbf{SSR} & \textbf{SR} & \textbf{P-SP} & \textbf{SSR} & \textbf{SR} & \textbf{P-SP} & \textbf{SSR} & \textbf{SR} & \textbf{P-SP} \\
\midrule
\multirow{3}{*}{Qwen3-0.6B}
& EWD   & 54.3 & 80.5 & 0.73 & 29.3 & 45.5 & 0.69 & 0.25 & 0.50 & 0.86 \\
& SWEET & 50.5 & 66.3 & 0.79 & 27.3 & 50.0 & 0.70 & 0.25 & 0.25 & 0.84 \\
& PF    & 33.3 & 52.0 & 0.66 & 12.3 & 28.3 & 0.72 & 0.00 & 0.25 & 0.87 \\
\midrule
\multirow{3}{*}{Qwen3-4B}
& EWD   & 56.5 & 87.0 & 0.73 & 33.5 & 50.3 & 0.73 & 0.00 & 0.25 & 0.93 \\
& SWEET & 52.3 & 71.3 & 0.75 & 26.8 & 44.3 & 0.78 & 0.25 & 0.25 & 0.94 \\
& PF    & 62.0 & 82.8 & 0.77 & 25.5 & 45.3 & 0.70 & 0.00 & 0.25 & 0.84 \\
\bottomrule
\end{tabular}
\label{tab:ce_anchor_ablation}
\end{table}

\subsection{Training set size ablation}
\label{app:training_set_size}

We conduct an empirical study to examine the sensitivity of RLSpoofer to training set size. Specifically, we train the models on data generated by watermarked Llama3.1-8B-Instruct under three watermarking schemes, EWD, SWEET, and PF, using training sets containing 50, 100, and 200 samples, where each sample is standardized to 500 tokens. To ensure a comparable number of update steps across different training set sizes, we train for 20 epochs when using 50 samples and for 5 epochs when using 200 samples. For the 100-sample setting, we use the standard training schedule described in Table~\ref{tab:hyperparamsRLS}. We adopt the same remaining hyperparameters and report the results in Table~\ref{tab:training_size_ablation}.

We observe that RLSpoofer achieves strong spoofing performance with only 50 training samples. On the logit-based watermarks EWD and SWEET, increasing the training set generally improves SSR for both attacker models, with the best performance typically obtained using 200 samples. In contrast, on the sampling-based distortion-free PF watermark, enlarging the training set does not lead to consistent gains. For both Qwen3-0.6B and Qwen3-4B, the best PF performance is achieved with 100 samples, while using 200 samples leads to a noticeable drop in SSR. These results suggest that RLSpoofer is sample-efficient and that relatively small training sets are already sufficient to expose strong spoofing vulnerabilities, especially for the more challenging distortion-free setting.

\begin{table}[ht]
\centering
\small
\setlength{\tabcolsep}{6pt}
\caption{Effect of training set size on spoofing performance.}
\begin{tabular}{llccccccccc}
\toprule
& & \multicolumn{3}{c}{\textbf{EWD}} & \multicolumn{3}{c}{\textbf{SWEET}} & \multicolumn{3}{c}{\textbf{PF}} \\
\cmidrule(lr){3-5} \cmidrule(lr){6-8} \cmidrule(lr){9-11}
\textbf{Model} & \textbf{Samples} & \textbf{SSR} & \textbf{SR} & \textbf{P-SP} & \textbf{SSR} & \textbf{SR} & \textbf{P-SP} & \textbf{SSR} & \textbf{SR} & \textbf{P-SP} \\
\midrule
\multirow{3}{*}{Qwen3-0.6B}
& 50  & 44.3 & 77.0 & 0.72 & 42.5 & 65.5 & 0.74 & 12.3 & 19.3 & 0.72 \\
& 100 & 54.3 & 80.5 & 0.73 & 50.5 & 66.3 & 0.79 & 33.3 & 52.0 & 0.66 \\
& 200 & 56.8 & 85.0 & 0.74 & 51.3 & 71.5 & 0.79 & 20.5 & 37.0 & 0.76 \\
\midrule
\multirow{3}{*}{Qwen3-4B}
& 50  & 46.5 & 79.0 & 0.73 & 43.0 & 72.0 & 0.70 & 20.8 & 44.0 & 0.68 \\
& 100 & 56.5 & 87.0 & 0.73 & 52.3 & 71.3 & 0.75 & 62.0 & 82.8 & 0.77 \\
& 200 & 58.5 & 88.0 & 0.74 & 54.3 & 65.3 & 0.77 & 25.3 & 36.3 & 0.81 \\
\bottomrule
\end{tabular}
\label{tab:training_size_ablation}
\end{table}

\subsection{Sensitivity to surrogate model selection}
\label{app:sensitivity}

We examine the sensitivity of RLSpoofer to the choice of surrogate reference model. Specifically, for each attacker, we compare two surrogate choices for approximating the target distributions: Qwen3-0.6B and the attacker itself ({Self}). The results are reported in Table~\ref{tab:surrogate_choice_ablation}. We find that surrogate choice has a limited impact on Qwen3-4B, where the two settings yield similar performance on EWD and SWEET, and {Self} performs slightly better on PF. In contrast, Qwen3-8B is much more sensitive: using Qwen3-0.6B consistently gives substantially higher SSR than {Self} across all three watermarking schemes. In particular, SSR improves from 45.3\% to 60.5\% on EWD, from 35\% to 54.3\% on SWEET, and from 29.8\% to 58.5\% on PF. These results suggest that surrogate selection becomes increasingly important as attacker capacity grows.

\begin{table}[ht]
\centering
\small
\setlength{\tabcolsep}{6pt}
\caption{Sensitivity of RLSpoofer to surrogate model choice.}
\begin{tabular}{llcccccc}
\toprule
& & \multicolumn{3}{c}{\textbf{Qwen3-0.6B}} & \multicolumn{3}{c}{\textbf{Self}} \\
\cmidrule(lr){3-5} \cmidrule(lr){6-8}
\textbf{Attacker} & \textbf{Scheme} & \textbf{SSR} & \textbf{SR} & \textbf{P-SP} & \textbf{SSR} & \textbf{SR} & \textbf{P-SP} \\
\midrule
\multirow{3}{*}{Qwen3-4B}
& EWD   & 56.5 & 85.8 & 0.73 & 56.5 & 87.0 & 0.73 \\
& SWEET & 53.3 & 73.0 & 0.75 & 52.3 & 71.3 & 0.75 \\
& PF    & 56.5 & 83.8 & 0.76 & 62.0 & 82.8 & 0.77 \\
\midrule
\multirow{3}{*}{Qwen3-8B}
& EWD   & 60.5 & 78.0 & 0.75 & 45.3 & 67.8 & 0.71 \\
& SWEET & 54.3 & 75.5 & 0.75 & 35.0 & 65.5 & 0.68 \\
& PF    & 58.5 & 79.3 & 0.78 & 29.8 & 33.8 & 0.86 \\
\bottomrule
\end{tabular}
\label{tab:surrogate_choice_ablation}
\end{table}

\subsection{Cross watermark transferability of RLSpoofer}
\label{app:transferability}
We further evaluate the zero-shot cross-watermark transferability of RLSpoofer. Specifically, we directly reuse the RLSpoofer-trained Qwen3-4B models obtained in Section~\ref{sec:main_Results_main}, and test each model on the other watermarking schemes without any further tuning. The results are reported in Table~\ref{tab:cross_watermark_transfer}.

We observe that transferability is not strictly constrained by the watermark family. In particular, the KGW-style logit-based watermarks EWD and SWEET exhibit substantial bidirectional transfer: training on EWD achieves 50.0\% SSR on SWEET, while training on SWEET achieves 47.8\% SSR on EWD. By contrast, interactions involving the sampling-based PF watermark are notably directional. Training on PF transfers effectively to EWD, achieving 52.5\% SSR, whereas the reverse direction is much weaker, with EWD-to-PF achieving only 6.0\% SSR. Similarly, PF-to-SWEET almost completely fails, yielding only 0.25\% SSR. These results suggest that transferability is determined not only by broad watermark family similarity, but also by more intricate and directional overlaps in the vulnerabilities induced by different watermarking schemes. 
\begin{table}[ht]
\centering
\small
\setlength{\tabcolsep}{6pt}
\caption{Cross-watermark transferability of RLSpoofer on Qwen3-4B.}
\begin{tabular}{lccccccccc}
\toprule
\textbf{Test}& \multicolumn{3}{c}{\textbf{EWD}} & \multicolumn{3}{c}{\textbf{SWEET}} & \multicolumn{3}{c}{\textbf{PF}} \\
\cmidrule(lr){1-1} \cmidrule(lr){2-4} \cmidrule(lr){5-7} \cmidrule(lr){8-10}
\textbf{Train} & \textbf{SSR} & \textbf{SR} & \textbf{P-SP} & \textbf{SSR} & \textbf{SR} & \textbf{P-SP} & \textbf{SSR} & \textbf{SR} & \textbf{P-SP} \\
\midrule
EWD   & 56.5 & 87.0 & 0.73 & 50.0 & 64.8 & 0.76 &  6.00 & 10.0 & 0.78 \\
SWEET & 47.8 & 58.8 & 0.78 & 52.3 & 71.3 & 0.75 & 47.0 & 70.5 & 0.72 \\
PF    & 52.5 & 80.5 & 0.75 &  0.25 &  1.00 & 0.61 & 62.0 & 82.8 & 0.77 \\
\bottomrule
\end{tabular}
\label{tab:cross_watermark_transfer}

\end{table}

\subsection{Limitations and Border Impact.}
\label{app:limitation}

\textbf{Limitations.} Although RLSpoofer demonstrates efficacy in evaluating the resilience of watermarks against spoofing attacks, it exhibits several notable limitations. Primarily, our method does not directly optimize the true detector-level spoofing objective. Instead, it relies on a tractable distributional surrogate together with reference-model-based local preference proxies for human-like and watermark-conditioned generation. Accordingly, our theory should be interpreted as motivating the reward design rather than providing a detector-agnostic guarantee of spoof success. In addition, the practical effectiveness of the method depends on the quality of the surrogate signals and can be sensitive to design choices such as reward weights, cross-entropy anchoring, and the choice of reference model.

\textbf{Border Impact.} Despite these limitations, RLspoofer provides a lightweight and practical stress test for evaluating whether current watermarking schemes remain reliable under realistic black-box attacks, which may help the community design more robust provenance and detection mechanisms. At the same time, because our method can expose concrete weaknesses in existing watermarking systems, it also carries dual-use risk if deployed irresponsibly. To mitigate this concern, we focus on evaluation rather than misuse, provide sufficient implementation details for scientific assessment.

\subsection{Responsible Release}
\label{app:safe_guard}
This work studies an attack method for evaluating the spoofing resilience of LLM watermarking schemes, and therefore carries potential dual-use risk. To mitigate misuse, we do not release the attack code during the review period. We provide implementation details, hyperparameter settings, and references sufficient for scientific assessment, while withholding a directly reusable attack pipeline. If the paper is accepted, we will carefully consider the form and scope of release to balance research transparency with misuse risk. Our goal is to support the evaluation of watermark robustness and the development of more spoofing-resistant defenses, rather than facilitate malicious deployment.

\section{Licenses of Existing Assets}
\label{app:asset_licenses}

We summarize the existing external assets used in this work, including codebases,
datasets, and models, together with their publicly stated licenses or access terms.
We cite the original papers and official repositories or model cards throughout the paper.
When an asset is distributed under a model-specific license, we use it in accordance
with the corresponding license agreement and acceptable use policy. The results are presented in Table~\ref{tab:asset_licenses}.

\begin{table}[ht]
\centering
\small
\setlength{\tabcolsep}{5pt}
\caption{Existing assets used in this work and their publicly stated licenses or access terms.}
\begin{tabular}{@{}lll@{}}
\toprule
Asset & Type & License / Terms of Use \\
\midrule

PMark~\cite{huo2025pmark}
& Code
& \makecell[l]{No explicit standard license found;\\ research reproduction only} \\

MarkLLM~\cite{pan2024markllm}
& Code
& Apache-2.0 \\

LLaMA-Factory~\cite{zheng2024llamafactory}
& Code
& Apache-2.0 \\

DITTO~\cite{an2025ditto}
& Code
& Apache-2.0 \\

TRL~\cite{vonwerra2022trl}
& Code
& Apache-2.0 \\

OpenR1~\cite{openr1}
& Code
& Apache-2.0 \\

C4 (RealNewslike subset)~\cite{raffel2020exploring}
& Dataset
& ODC-BY \\

Reddit WritingPrompts~\cite{verma2024ghostbuster}
& Dataset
& CC BY 3.0 \\

LFQA~\cite{krishna2023paraphrasing}
& Dataset
& Apache-2.0 \\

MMW benchmark~\cite{piet2025markmywords}
& Dataset / Benchmark
& Apache-2.0 \\

Qwen3 (0.6B, 1.7B, 4B, 8B)~\cite{yang2025qwen3}
& Model
& Apache-2.0 \\

Qwen2.5-3B-Instruct~\cite{qwen2025qwen25technicalreport}
& Model
& Qwen Research License \\

Llama3.1-8B-Instruct~\cite{grattafiori2024llama}
& Model
& \makecell[l]{Llama 3.1 Community License;\\ acceptable use policy applies} \\

Llama3.2-3B-Instruct~\cite{grattafiori2024llama}
& Model
& \makecell[l]{Llama 3.2 Community License;\\ acceptable use policy applies} \\

\bottomrule
\end{tabular}
\label{tab:asset_licenses}
\end{table}


\newpage

\end{document}